\def\dOi{11(2:4)2015}
\subjclass{H.2}
\newcommand{\desphi}[1]{desc_{#1}}
\newcommand{\FOCTPd}{\exists^+(child,desc_{\phi})[down](c,s)}
\newcommand{\FOBFKd}{\exists^+(child,desc)[down](c,s)}
\newcommand{\FOCTP}{\exists^+(child,desc_{\phi})}
\newcommand{\untilpath}{ (\downarrow;?\phi)^*;\downarrow}
\newcommand{\upath}[1]{(\downarrow;?#1)^*;\downarrow}
\newcommand{\until}[2]{\langle{#1}^+\rangle{#2}}
\renewcommand{\phi}{\varphi}
\newcommand{\descstar}{\tup{\downarrow^*}}
\newcommand{\child}[1]{\tup{\downarrow}{#1}}
\newcommand{\dEsc}[1]{\tup{\downarrow^+}{#1}}
\newcommand{\mlcont}{\subseteq_{\mathsf{ML}}}
\newcommand {\pvar}{\mathsf{Prop}}
\newcommand {\ch}[1]{\langle{#1}\rangle}
\newcommand {\nat}{\mathbb{N}}
\newcommand {\TP}{\mathsf{TP}}
\newcommand {\CQ}{\mathsf{CQ}}
\newcommand {\CTP}{\mathsf{CTP}}
\newcommand {\CXP}{\mathsf{CTP}}
\newcommand {\ct}{\mathsf{c}}
\newcommand {\RXP}{\mathsf{RXPath}}
\newcommand {\EMXP}{\mathsf{\mu TP}_{\kern -2pt\mathord{\lor}}}
\newcommand {\ECTL}{\exists\textsc{CTL}}
\newcommand {\CTL}{\textsc{ CTL}}
\newcommand {\sem}[1]{\llbracket {#1} \rrbracket}
\newcommand{\tup}[1]{\langle #1\rangle}
\newcommand{\PTIME}{\textsc{PTime}\xspace}
\newcommand{\PSPACE}{\textsc{PSpace}\xspace}
\newcommand{\EXPSPACE}{\textsc{ExpSpace}\xspace}
\newcommand{\EXPTIME}{\textsc{ExpTime}\xspace}
\newcommand{\coNP}{\textsc{coNP}\xspace}
\renewcommand{\phi}{\varphi}
\theoremstyle{plain}
\begin{document}

\title[Containment of CTP]{Containment for Conditional Tree Patterns}

\author[A.~Facchini]{Alessandro Facchini\rsuper a}
\address{{\lsuper A}IDSIA - 
Dalle Molle Institute for Artificial Intelligence}
\email{alessandro.facchini@idsia.ch}

\author[Y.~Hirai]{Yoichi Hirai\rsuper b}
\address{{\lsuper b}FireEye}
\email{yoichi.hirai@fireeye.com}

\author[M.~Marx]{Maarten Marx\rsuper c}
\address{{\lsuper{c,d}}ISLA, University of Amsterdam}
\email{\{maartenmarx,e.sherkhonov\}@uva.nl}
 
\author[E.~Sherkhonov]{Evgeny Sherkhonov\rsuper d}
\address{\vspace{-18 pt}}
\thanks{\rsuper{c,d}This research was supported by the Netherlands Organization for Scientific Research (NWO) under project number 612.001.012 (DEX).}
\keywords{XPath, XML, Conditional Tree Pattern, Containment, Complexity}

\begin{abstract}
\noindent A Conditional Tree Pattern ($\CTP$) expands an XML tree pattern with
 labels attached to the descendant edges.
These labels can be XML element names or Boolean $\CXP$'s.
The meaning of a descendant edge labelled by $A$ and ending in a node
labelled by $B$ is a path of child steps ending in a $B$ node such
that all intermediate nodes are $A$ nodes.
In effect this expresses the \emph{until B, A holds} construction from
temporal logic.

This paper studies the containment problem for $\CTP$.
For tree patterns ($\TP$), this problem is known to be $\coNP$-complete.
We show that it is $\PSPACE$-complete for $\CTP$. This increase in complexity 
is due to the fact that $\CTP$ is expressive enough to encode an unrestricted form of label negation: ${*}\setminus a$, meaning "any node except an $a$-node". Containment of $\TP$ expanded with this type of negation is already $\PSPACE$-hard.

$\CTP$ is a positive, forward, first order fragment of Regular XPath. Unlike $\TP$, $\CTP$ expanded with disjunction is not equivalent to unions of $\CXP$'s.
Like $\TP$, $\CTP$ is a natural fragment to consider:
 $\CTP$ is closed under intersections and  $\CTP$ with disjunction is equally expressive  as positive existential first order logic expanded with the until operator. 

\end{abstract}

\maketitle

\section{Introduction}

Tree Patterns, abbreviated as $\TP$, are one of the most studied languages for XML documents and used in almost all aspects of XML data managment. Tree patterns are a natural language: over trees, unions of tree patterns are equally expressive as positive first order logic \cite{bene:stru05}.
Also, like relational conjunctive queries, the semantics of $\TP$'s can be given by embeddings from patterns to tree models \cite{amer:tree02}.
Equivalence and containment of $\TP$'s is decidable in $\PTIME$ for several fragments \cite{amer:tree02}, and $\coNP$ complete in general \cite{mikl:cont04}.
Tree patterns can be represented as trees, as in Figure~\ref{fig:tp}, or be given a natural XPath-like syntax.

\begin{figure}
 \begin{center}
  \includegraphics[scale=0.8]{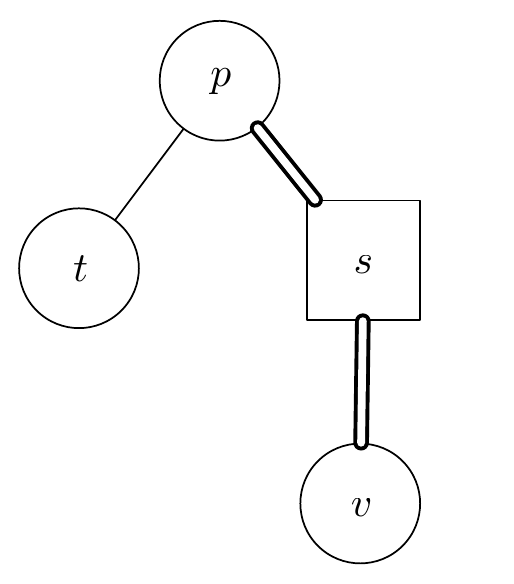}
 \end{center}
  \caption{The tree pattern corresponding to the XPath expression \texttt{/p[t]//s[.//v].} The node in the square box denotes the output node.}
  \label{fig:tp}
\end{figure}

\begin{figure}[tbp]
\begin{center}
\includegraphics[width=0.30\textwidth]{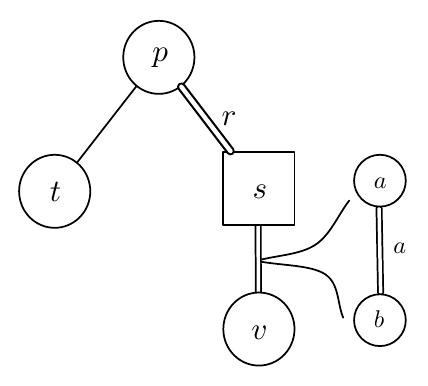}
\caption{Conditional tree pattern corresponding to \eqref{eq:CTP1}. 
 }
\label{fig:CTP1}
\end{center}
\end{figure}

In this paper, we study the expansion of tree patterns with the conditional
descendant axis. 
 We call this expansion \emph{Conditional Tree Patterns}, abbreviated as $\CTP$.
Where the descendant axis in $\TP$ can be written as
the transitive reflexive closure of the XPath step $\mathtt{child{::}*}$, the
conditional descendant axis is the transitive closure of
$\mathtt{child::p[F_1][F_2]\ldots[F_n]}$, where $n\geq 0$ and each $F_i$ is an XPath expression which might contain the conditional descendent axis itself,  followed by a child step.
Syntactically, the expansion is straightforward: in the tree representation add labels  representing conditional tree patterns themselves to the edges.  
Figure~\ref{fig:CTP1} contains an example which is equivalent to the following XPath-like formula in which we use $(\cdot)^*$ to denote the reflexive transitive closure of a path formula:

 \begin{align}
\mathtt{se}&\mathtt{lf::p[child::t]/(child::r)^*/child::s} \nonumber\\
& \mathtt{\ \ [ (child::a[(child::a)^*/child::b])^*/child::v]} \label{eq:CTP1}
\end{align}
The edge in Figure~\ref{fig:CTP1} labeled by $r$ corresponds to the  expression $\mathtt{(child::r)^*/child::*}$, and merely states that all intermediate nodes are labeled by $r$. The other labeled edge shows a nesting of patterns, corresponding to a nested transitive closure statement: all intermediate nodes have to be labeled by $a$ and moreover have to start an $a$-labeled path ending in a $b$-node.

Conditional tree patterns are the forward fragment of Conditional XPath  \cite{marx:cond05} without disjunction and negation. The conditional descendant axis is closely related to the strict until operator from temporal logic \cite{marx:cond05,libk:reas10}. 

\subsection*{Main results}
Our main results concern the expressivity of $\CTP$ and the complexity of the containment problem.
We consider two types of models: the standard XML trees in which each node has exactly one label, and trees in which nodes can have an arbitrary number of labels.  These latter, called \emph{multi-labeled trees}, are the models considered in temporal logic. All our results hold for both semantics.
Models with multiple labels are a convenient logical abstraction for reasoning about tree patterns expanded with attribute value equalities. These are expressions of the form $@_a=c$, where $a$ is an attribute, $c$ a constant, meaning that it holds at a node if and only if the value of the $a$-attribute of the node equals $c$. (With that we can express XPath formulas like \texttt{//table[@border = '1']}).

Containment of tree patterns has been studied in \cite{amer:tree02,mikl:cont04, neve:comp06}. The most relevant results for this paper are that containment of $\TP$'s is $\coNP$-complete in general \cite{mikl:cont04} and $\PSPACE$-complete when the domain of labels is finite \cite{neve:comp06}. 
We show that containment of $\CTP$'s is $\PSPACE$-complete (with both finite and infinite domain of labels). Interestingly, this increase in complexity 
is due to the fact that $\CTP$ is expressive enough to encode an unrestricted form of label negation: ${*}\setminus a$, meaning ``any node except an $a$-node". We show that containment of $\TP$'s expanded with this form of negation is already $\PSPACE$-hard. 
The matching upper bound for   $\CTP$ containment is easily obtained by a translation into Existential $\CTL$ ~\cite{kupf:auto00}.
As a contrast we consider the expansion of $\TP$ with a \emph{safe} form of propositional negation $n\setminus a$~, which selects nodes with the label containing $n$ and not $a$, instead of ${*}\setminus a$ \cite{bara:guar11}. Note that this construct only makes sense on models with multiple labels. 
With respect to expressivity, we show that most results for $\TP$ can be generalized to $\CTP$.
$\CTP$'s can be interpreted in trees by generalizing the $\TP$-embeddings to  simulations known from temporal logic. Similarly to the characterization for $\TP$ in  \cite{bene:stru05}, we show that $\CTP$'s  with disjunction and union are equally expressive as positive first order logic expanded with an \emph{until} operator. From this we obtain that like $\TP$'s, $\CTP$'s are closed under taking intersections.

\subsection*{Organization}
The paper is organized as follows.
This section is continued with a few more comparisons between $\TP$ and $\CTP$, related work and a motivating example.
Section~\ref{sec:prelim} contains preliminaries. Section~\ref{sec:expr} contains the expressivity results and Section~\ref{sec:cont} the results on the complexity of the containment problem.  
We end with conclusions and open questions.

\subsection*{Comparing logical properties of $\TP$ and $\CTP$}

A characteristic difference between $\TP$ and Relational Conjunctive Queries ($\CQ$) is the disjunction
property: if A $\models B\vee C$, then $A \models B$ or $A\models
C$. This holds for $\CQ$, but not for $\TP$. A counterexample is
$//p\models /p\  \vee\  /{*}//p$. 
The languages $\TP$ and $\CTP$ differ on the following:

\begin{description}
\item[Unions] $\TP$ expanded with disjunction is equally expressive as
	   unions of $\TP$~\cite{bene:stru05}. However, $\CTP$ with disjunction is more expressive
	   than unions of $\CTP$.
\item[Countermodels] If containment between two $\TP$'s fails, there is 
	   a countermodel for it of polynomial size~\cite{mikl:cont04}.  Countermodels for
	   $\CXP$ containment may be exponential.
\item[Complexity] $\TP$ containment is $\coNP$-complete~\cite{mikl:cont04} and 
$\CTP$ containment is $\PSPACE$-complete. For both languages this  remains true if we add
disjunction to the language (for $\TP$ see~\cite{neve:comp06}).
\end{description}
However, there are a few useful technical results that $\TP$ and $\CTP$ share.
\begin{description}

\item[Monotonicity] $\TP$ and $\CTP$ formulas are preserved
	   under extensions of  models at the leaves (i.e. when the
          original model is a sub\emph{tree} of the extension). This means that if a $\TP$ ($\CTP$) formula holds in a tree, then it holds in the extensions of the tree. 

\item [Multiple output nodes] The containment problem for $\TP$ and $\CTP$ formulas with multiple output nodes can be reduced to containment of Boolean $\TP$ and $\CTP$ formulas (i.e. when the formula has a single output node which is the root).
\item[Containment for unions] Containment for unions of  $\TP$s can be reduced in $\PTIME$ to checking a set of containments between $\TP$ formulas~\cite{mikl:cont04}. This reduction can be seen as a weak disjunction property.
A similar property holds for $\CTP$ (see Proposition~\ref{prop:mik}).
\item[Multi-labeled models] There is a $\PTIME$ reduction from the containment problem over trees in which each node has exactly one label to  the containment problem over multi-labeled trees. This holds for both $\TP$ and $\CTP$. 
\end{description}

\noindent For $\TP$, most of these results are in~\cite{mikl:cont04}. Here we show how their proofs
can be generalized to $\CTP$. 

\subsection*{Motivation}

Tree patterns exhibit a nice tradeoff between expressive power and the complexity of static analysis. However, there are natural scenarios where tree patterns are not powerful enough, e.g. in Example~\ref{ex:cond}. 
 The conditional axis gives us more querying capabilities while preserving some of the nice properties of tree patterns (see the comparison above).   
 \begin{exa}
 \label{ex:cond}
Conditional tree patterns are  used to query tree shaped structures. As an example, take the tree structure of the UNIX file system. In this file system, every file and directory has different access permissions (read, write or execute) for different type of users (the user, the group and others). Thus, the file system can be modeled as a tree where each node corresponds to a directory or a file (labeled by "dir" and "file" respectively) and has a required attribute for each pair $(user, access\ right)$ which takes values from $\{0,1\}$. A file can only be a leaf in the tree.

Assume we want to ask for all the files that are readable by the user. This means that we are looking for precisely those files for which the following permissions hold:
\begin{itemize}
\item the file is readable for the user,
\item the directory in which the file resides is both readable and executable for the user,
\item the same holds recursively for all the directories from the root to the file.
\end{itemize}
This query can be neatly expressed as a $\CTP$ path formula:
$$\mathtt{/(child::dir[@_{(user,read)}=1][@_{(user,execute)}=1])^*/ child::file[@_{(user, read)}=1]}.$$
Additionally, $\CTP$ can  express non-trivial constraints over the tree representing the file system. For instance, the  formula 
\[
\mathtt{//self::file[@_{(other,read)}=1]} \rightarrow \mathtt{/(child::dir[@_{(other,read)}=1]} \mathtt{[@_{(other,execute)}=1])^*/ *} 
\]
imposes the constraint that for  all files which are  readable it holds that  every directory on the path from the root to this file must be both readable and executable by others.

\end{exa}

 It is known that the conditional axis cannot be expressed by tree patterns or in  Core XPath~\cite{marx:cond05}. On the other hand, the queries from Example~\ref{ex:cond} can be expressed in the positive forward fragment of Regular XPath, which is more expressive than $\CTP$. However, we believe this additional expressive power leads to increase in the complexity (of containment) than for conditional tree patterns. 

\subsection{Related Work}
The complexity of tree patterns is studied in a number of papers and
since \cite{neve:comp06} a virtually complete picture exists for the
complexity of the containment problem for positive fragments of XPath.
Miklau and Suciu~\cite{mikl:cont04} show that the complexity of the containment
problem for $\TP$ with filters, wildcard and descendant is $\coNP$-complete.
Containment and equivalence for fragments of $\TP$ were studied
before. The most interesting result is that containment for $\TP$
without the wildcard is in $\PTIME$ \cite{amer:tree02}. 

Our conditional tree patterns are a first order fragment of conjunctive regular
path queries.  Calvanese et al.~\cite{calv:cont00} show that the
complexity of containment of these queries is $\EXPSPACE$-complete,
but these are interpreted on general graph models.

Conditional XPath \cite{marx:cond05} and conditional tree patterns
are closely related to branching time temporal logic $\CTL$
\cite{clar:auto86}. The conditional child
axis and the strict until connective are interdefinable.
Kupferman and Vardi~\cite{kupf:auto00}
show that the containment  problem for $\ECTL$, which is the
restriction of $\CTL$ to formulas having only the $\exists$ path
quantifier in front of them, is a $\PSPACE$-complete problem. The
positive
$\ECTL$-fragment without until  was also
studied by Miklau and Suciu~\cite{mikl:cont04}. They show that the containment problem for this fragment is equivalent to the  $\TP$-containment problem and thus also $\coNP$-complete. 

This paper studies the containment problem without presence of schema information. A number of complexity results for containment in $\TP$ w.r.t DTDs are given in~\cite{neve:comp06}. In particular, containment for $\TP$ with filters, the wildcard and descendent w.r.t DTD is $\EXPTIME$-complete. This hardness result together with an $\EXPTIME$ upper bound for Conditional XPath~\cite{marx:cond05}, which contains $\CTP$, gives us $\EXPTIME$-completeness of containment for $\CTP$ in the presence of DTDs.

\section{Preliminaries}
\label{sec:prelim}

We review the basic definitions of XML trees, Regular XPath and its semantics.
Then we present Tree Patterns and Conditional Tree Patterns as sublanguages of Regular XPath. 
Tree patterns have an alternative semantics in terms of embeddings
\cite{mikl:cont04}. We give such an ``embedding semantics'' for
Conditional Tree Patterns using ``until-simulations'' in Section~\ref{sec:expr}.

\subsection{Trees}
We work with node-labeled unranked finite trees, where the
node labels are elements of an infinite set $\Sigma$.
Formally, a tree over $\Sigma$ is a tuple $(N, E, r, \rho)$, where
$N$, the set of nodes of the tree, is a prefix closed set of finite
sequences of natural numbers,  $E=$ $\{ (\langle n_1, \dots, n_k\rangle,
\langle n_1, \dots, n_k, n_{k+1}\rangle)$ $~\mid ~$ $ \langle n_1,
\dots, n_{k+1}\rangle \in N\}$ is the edge or child relation, $r$ is the
root of the tree, that is the empty sequence, and $\rho$ is the function
assigning to each node in $N$ a finite subset of $\Sigma$. We refer a tree over $\Sigma$ just as a tree if $\Sigma$ is clear from the context.

Trees in which $\rho(\cdot)$ is always a singleton are called \emph{single-labeled} or\emph{ XML trees}.
Trees without this restriction are called \emph{multi-labeled trees}.

We denote by $E^+$ the descendant relation, which is the transitive
closure of the edge relation $E$, and by $E^*$ the reflexive and transitive
closure of $E$, and by $E(x)$ the set of all children of the node $x$. A node $n$ is a 
\emph{leaf} if $E(n)$ is empty. A \emph{path} from a node $n$ to
a node $m$ is a sequence of nodes $n=n_0, \dots, n_k=m$, with $k>0$,
such that for each $i\leq k$, $n_{i+1} \in E(n_i)$.
We call a \emph{branch}  any maximal path starting from
the root. If $mE^+n$, $m$ is called an \emph{ancestor} of $n$, and if $mEn$, $m$ is called the
\emph{parent} of $n$.

  If $n$ is in $N$, by $T.n$ we denote the subtree of $T$
rooted at $n$.
A \textit{pointed tree} is a pair $T,n$ where $T$ is a tree and $n$ is a
node of $T$.  The \emph{height} of a pointed tree $T,n$ is the length of the
 longest path in $T.n$.

Given two trees $T_1=(N_1,E_1, r_1, \rho_1)$ and $T_2=(N_2, E_2, r_2, \rho_2)$ such that $N_1$ and $N_2$ are disjoint, we define the result of \emph{fusion} of $T_1$ and $T_2$, denoted as $T_1\oplus T_2$, as the tree obtained by joining the trees $T_1$ and $T_2$ without the roots under a new common root labeled by the union of the labels of the roots of $T_1$ and $T_2$. Formally, $T_1\oplus T_2$ is the tree $T=(N,E, r, \rho)$, where $N= (N_1\setminus\{r_1\})\cup (N_2\setminus \{r_2\})\cup \{r\}$, $E=(E_1\setminus \{\tup{r_1,n}\mid n\in N_1\})\cup (E_2\setminus \{\tup{r_2, n}\mid n\in N_2\})\cup \{\tup{r,n}\mid \tup{r_1, n}\in E_1 \text{ or }\tup{r_2, n}\in E_2\}$ and

 $\rho(n) =
  \begin{cases}
   \rho_1(r_1)\cup \rho_2(r_2) & \text{if } n=r, \\
   \rho_1(n)       & \text{if } n\in N_1\setminus \{r_1\}, \\
   \rho_2(n)      & \text{if } n\in N_2\setminus \{r_2\}.
   \end{cases}
$

\subsection{XPath and Tree Patterns}

We define Tree Patterns and Conditional Tree Patterns as sublanguages
of Regular XPath~\cite{cate:expr06}.  

\begin{defi}[Forward Regular XPath]
 Let $\Sigma$ be an infinite domain of labels. Forward Regular
XPath, $\RXP$ for short, consists of node formulas
 $\phi$ and path formulas $\alpha$ which are defined by the following grammar
\begin{center}
\begin{tabular}{ccl}
$\phi$ & $::=$ &  $p \mid \top \mid  \lnot \phi \mid\phi \lor \phi  \mid\phi
 \land \phi \mid \langle \alpha \rangle \phi$ \\
$\alpha$ & $::=$ & $ \downarrow\  \mid\  ?\phi \mid \alpha ; \alpha \mid \alpha
	 \cup \alpha \mid \alpha^* $,  
\end{tabular}
\end{center}
where $p \in \Sigma$.
\end{defi}

We will use $\alpha^+$ as an abbreviation of $\alpha;\alpha^*$.

For  the semantics of $\RXP$,
given a   tree $T=(N, E, r, \rho)$ over $\Sigma$,
the relation $\sem{\alpha}_T\subseteq N\times N$ for a path expression
$\alpha$ and  the satisfaction relation $\models$ between pointed trees
 and node formulas are inductively defined as follows:
 \begin{itemize}
  \item $\sem{\downarrow}_T = E$,
    \item $\sem{ ? \phi }_T = \{(n,n)\in N\times N\mid\, T,n \models \phi\}$,
  \item $\sem{ \alpha;\beta }_T = \sem{\alpha}_T \circ \sem{\beta}_T$, 
    \item $\sem{ \alpha\cup\beta }_T = \sem{\alpha}_T \cup \sem{\beta}_T$,
  \item $\sem{\alpha^\ast}_T=  (\sem{\alpha}_T)^\ast
      $,
 \end{itemize}
and
 \begin{itemize}
 \item $T,n\models \top$,
  \item $T,n\models p$ iff $p\in \rho(n)$,
    \item $T,n\models \lnot\phi$ iff $T,n\not\models\phi$,
  \item $T,n\models \phi\land\psi$ iff $T,n\models\phi$ and $T,n\models\psi$,
  \item $T,n\models \phi\lor \psi$ iff $T,n\models\phi$ or
	$T,n\models\psi$,
  \item $T,n\models \ch{\alpha}\phi$ iff there is a node~$m$ with
	$(n,m)\in \sem{ \alpha}_T$ and $T,m\models\phi$.
 \end{itemize}

 Sometimes we will write $T\models \phi$ to denote $T,r\models \phi$. If the latter holds, we say that $T$ is a \emph{model} of $\phi$.

 \subsubsection*{(Conditional) Tree Patterns}
 \label{sec:cond-tree-patt}

Tree patterns are the conjunctive fragment of $\RXP$ without unions of
paths and with a strongly restricted Kleene star operation. Node
formulas correspond to Boolean tree patterns, and path formulas to
tree patterns with one output node.  

 In the following definitions we again let $\Sigma$ be an infinite domain of labels.
We define Tree Patterns by restricting the syntax of $\RXP$
as follows:

\begin{defi}[Tree Pattern]
Tree Patterns ($\TP$) consist of node formulas $\phi$ and path formulas $\alpha$ defined by the following grammar:
\begin{center}
\begin{tabular}{ccl}
$\phi$ & $::=$ &  $p \mid \top \mid  \phi
 \land \phi \mid \langle \alpha \rangle \phi$ \\
$\alpha$ & $::=$ & $ \downarrow\  \mid\  ?\phi \mid \alpha ; \alpha \mid \ \downarrow^+, $  
\end{tabular}
\end{center}
where $p \in \Sigma$.
\end{defi}

For example, the tree pattern from Figure~\ref{fig:tp} can be written as the path formula $\alpha = ?(p \land \child{t}); \downarrow^+; ?(s\land \child{v})$.
Conditional Tree Patterns are also defined by restricting the syntax of $\RXP$, where we allow conditional descendant paths.
\begin{defi}[Conditional Tree Pattern]
Conditional Tree Patterns ($\CTP$) consist of node formulas $\phi$ and path formulas defined by the following grammar:
\begin{center}
\begin{tabular}{ccl}
$\phi$ & $::=$ &  $p \mid \top \mid  \phi
 \land \phi \mid \langle \alpha \rangle \phi$ \\
$\alpha$ & $::=$ & $ \downarrow\  \mid\  ?\phi \mid \alpha ; \alpha \mid
	 (\downarrow;?\phi)^*;\downarrow\enspace$, 
\end{tabular}
\end{center}
where $p \in \Sigma$.
\end{defi}

The tree from Figure~\ref{fig:CTP1} can be written as the path formula
$$?(p \land \child{t}); \upath{r}; ?(s\land \tup{ \upath{(a \land \tup{\upath{a}}b)}}v ).$$
Note that the node formula  $\langle (\downarrow;?\phi)^*;\downarrow  \rangle \psi$ is
exactly the strict until $\exists U(\psi,\phi)$ from branching time
temporal logic $\CTL$~\cite{kupf:auto00}. 
We will abbreviate this formula simply as $\until{\phi}{\psi}$. 
Because of the equivalences $\langle \alpha;\beta\rangle\phi\equiv \langle\alpha\rangle \langle\beta\rangle\phi$ and $\langle?\phi\rangle\psi\equiv \phi\wedge\psi$, $\CTP$ node formulas can be given the following equivalent definition:
\begin{center}
\begin{tabular}{ccl}
$\phi$ & $::=$ &  $p \mid \top \mid  \phi
 \land \phi \mid \langle \downarrow \rangle \phi \mid \langle \downarrow^+ \rangle \phi  \mid \until{\phi}{\phi}$.
\end{tabular}
\end{center}

Even though the syntax is slightly different, $\CXP$ is the
conjunctive forward only fragment of Conditional XPath~\cite{marx:cond05}.

\subsubsection*{Expansions}

We consider two expansions of $\TP$ and $\CTP$,
 with negated labels and with disjunction in node formulas together with union in paths. 
Negated labels is a restricted type of negation where only the construct $\neg p$ is allowed in the node formulas. 
We denote expansions of the language $L$ with one or
two of these features by $L^S$ for $S\subseteq \{\neg,\vee \}$. 

\subsubsection*{Query evaluation}
In \cite{gott:effi05} it was shown that the query evaluation problem for Core XPath is $\PTIME$-complete in the combined complexity. As noted in \cite{marx:cond05}, using results on model checking for Propositional Dynamic Logic, this can be extended to Regular XPath, and thus to all our defined fragments. 

\begin{fact}[\cite{marx:cond05}]
Let $T$ be a tree, $n_1, n_2$ nodes in $T$, and  $\alpha$ a Regular XPath path formula. The problem whether $(n_1, n_2)\in \sem{\alpha}_T$ is decidable in time $O(|T|\times |\alpha|)$ with $|T|$ the size of the tree and $|\alpha|$ the size of the formula.
\end{fact}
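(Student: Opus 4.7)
The plan is to reduce the path-membership question to standard bottom-up model checking of a node formula. First, I would introduce a fresh label $q\notin\Sigma$ and let $T'$ be the tree obtained from $T$ by adding $q$ to $\rho(n_2)$. Then $(n_1,n_2)\in \sem{\alpha}_{T}$ if and only if $T',n_1\models \langle\alpha\rangle q$, so it suffices to decide the latter node-formula membership in time $O(|T|\times|\alpha|)$.

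The second step is to process subformulas in order of increasing size, maintaining for each node subformula $\phi$ the set $\sem{\phi}_{T'}\subseteq N$. The Boolean cases ($\top$, $p$, $\neg\phi$, $\phi\wedge\psi$, $\phi\vee\psi$) are each handled with one linear sweep, so $O(|T|)$ work per subformula. The modal case $\langle\beta\rangle\psi$ is the crucial one: rather than materialising the potentially quadratic binary relation $\sem{\beta}_{T'}$, I would compute only the backward image $\sem{\beta}_{T'}^{-1}(\sem{\psi}_{T'}) = \{n \mid \exists m\in \sem{\psi}_{T'}\text{ with }(n,m)\in \sem{\beta}_{T'}\}$ by recursion on $\beta$. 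The atomic step $\downarrow$ reduces to taking parents; for tests, $\sem{?\chi}_{T'}^{-1}(S)=\sem{\chi}_{T'}\cap S$; composition gives $\sem{\beta_1;\beta_2}_{T'}^{-1}(S)=\sem{\beta_1}_{T'}^{-1}(\sem{\beta_2}_{T'}^{-1}(S))$; and union yields $\sem{\beta_1\cup\beta_2}_{T'}^{-1}(S)=\sem{\beta_1}_{T'}^{-1}(S)\cup \sem{\beta_2}_{T'}^{-1}(S)$.

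The hard part will be the Kleene-star case, where $\sem{\beta^{*}}_{T'}^{-1}(S)$ is the least fixed point of $X\mapsto S\cup \sem{\beta}_{T'}^{-1}(X)$. A naive iteration could need $|N|$ rounds and yield $O(|T|^{2})$ work per star. The remedy is a work-list / backward-BFS algorithm: starting from the seed set $S$, repeatedly add an unmarked node that witnesses a one-step $\beta$-predecessor of some already marked node, and mark it. Since each node is marked at most once and each tree edge is inspected a constant number of times, the whole star pre-image is produced in $O(|T|)$ time, provided that the one-step operator $\sem{\beta}_{T'}^{-1}$ is treated as an opaque per-edge test whose construction cost has already been charged to the subformulas of $\beta$. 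The most delicate point is to check that this amortised bound survives nested stars; this is essentially the well-known Cleaveland--Steffen style analysis of model checking for the alternation-free mu-calculus, of which PDL (and hence Regular XPath) is a fragment.

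Summing these linear-time stages over the $O(|\alpha|)$ subformulas of $\langle\alpha\rangle q$ yields the claimed $O(|T|\times|\alpha|)$ bound, and the final answer is simply the bit $n_1\in \sem{\langle\alpha\rangle q}_{T'}$.
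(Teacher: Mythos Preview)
The paper does not supply its own proof of this statement: it is recorded as a \emph{Fact} cited from \cite{marx:cond05}, with only the one-line justification that the result follows from model-checking algorithms for Propositional Dynamic Logic. So there is no proof in the paper to compare your proposal against; what you have written is essentially a sketch of the PDL model-checking argument that the cited reference relies on, and in that sense it is consistent with the paper's appeal to the literature.

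Two remarks on your sketch itself. First, the fresh-label reduction to a node formula is correct but not needed: the backward-image computation you describe already answers the binary question directly (just test $n_1\in\sem{\alpha}_{T}^{-1}(\{n_2\})$). Second, the Kleene-star case is the only place where your argument is slightly loose. The claim that ``each tree edge is inspected a constant number of times'' in the work-list BFS is only immediate when the one-step program $\beta$ is atomic; for compound $\beta$ (in particular when $\beta$ itself contains stars or compositions), $\sem{\beta}^{-1}$ is not a per-edge test, and a naive work-list could still re-evaluate the subformulas of $\beta$ on overlapping sets. The right accounting---which you correctly point to via the Cleaveland--Steffen analysis for the alternation-free $\mu$-calculus---is that the entire bottom-up computation can be organised so that each (subformula, node) pair is settled at most once, giving the global $O(|T|\cdot|\alpha|)$ bound. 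With that amortisation made explicit, your argument goes through.
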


\subsection{Containment}
As we are considering two kinds of expressions, path and node
expressions, we have  different notions of containment.

\begin{defi} Let $\varphi$ and $\psi$ be two $\RXP$-node
 formulas. We say that
$\varphi$ is \emph{contained} in $\psi$, notation  $\phi \subseteq \psi$,
if  for every $T,n$,
 $T,n\models\varphi$ implies $T,n\models\psi$.
 Let $\alpha$ and $\beta$ be two $\RXP$-path
 formulas. We say that
\begin{itemize}
\item $\alpha$ is \emph{contained as a binary query} in $\beta$, denoted
      $\alpha \subseteq^2 \beta$,
if for any
  tree $T$,
 $\sem{\alpha}_T \subseteq \sem{\beta}_T$,
 \item $\alpha$ is \emph{contained as a unary query} in $\beta$, denoted
       \mbox{$\alpha \subseteq^1 \beta$},
if for any
  tree $T$ with root $r$, and any node $n$,
 $(r, n) \in \sem{\alpha}_T$ implies  $(r,n) \in\sem{\beta}_T$\enspace.
\end{itemize}
Containment over single-labeled trees is denoted by $\subseteq$, and containment over multi-labeled models by $\mlcont$. 
\end{defi}

Luckily, these three notions are closely related, and containment of
path formulas can be reduced to  containment of node formulas and vice versa
(cf. also \cite{neve:comp06,mikl:cont04}).

\begin{prop}\label{prop:output}
Let $\alpha$ and $\beta$ be two $\RXP$-path formulas, $\phi$ and $\psi$ $\RXP$-node formulas, in which negation is restricted to labels only.
\begin{enumerate}[label=(\roman*)]
\item  $\alpha
 \subseteq^2 \beta$ iff $\alpha \subseteq^1 \beta$,

\item
Let  $p$ be a
 label not occurring in $\alpha$ or in $\beta$. Then,
 $\alpha \subseteq^2 \beta$ iff $\langle \alpha \rangle\langle \downarrow \rangle p \subseteq
 \langle \beta \rangle\langle \downarrow \rangle p$,
 
\item $\phi\subseteq \psi$ iff $?\phi \subseteq^2 ?\psi$.
 
\end{enumerate} 
All the above items also hold for the case of multi-labeled trees.  
\end{prop}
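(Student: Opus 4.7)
The plan is to treat the three items in the order (iii), (i), (ii), since the later parts will reuse the monotonicity idea and the first is essentially unpacking the semantics of $?\phi$.

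For (iii), I would argue directly from the clause $\sem{?\phi}_T = \{(n,n) \mid T,n\models\phi\}$. The left-to-right direction is immediate since $(n,n)\in\sem{?\phi}_T$ forces $T,n\models\phi$, hence $T,n\models\psi$, hence $(n,n)\in\sem{?\psi}_T$; and $\sem{?\phi}_T$ contains only diagonal pairs. The right-to-left direction picks any $T,n$ witnessing $T,n\models\phi$ and observes that $(n,n)\in\sem{?\phi}_T\subseteq^2\sem{?\psi}_T$ yields $T,n\models\psi$.

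For (i), the forward direction is trivial. For the converse, given $(n,m)\in\sem{\alpha}_T$, I would pass to the subtree $T.n$ rooted at $n$. Because $\RXP$ is forward (only $\downarrow$ and $?\phi$) and negation is restricted to labels, node formulas and path formulas are evaluated identically at a node and in the subtree rooted at that node; so $(r_{T.n},m)\in\sem{\alpha}_{T.n}$. The hypothesis $\alpha\subseteq^1\beta$ then gives $(r_{T.n},m)\in\sem{\beta}_{T.n}$, and forwardness again lifts this back to $(n,m)\in\sem{\beta}_T$.

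Part (ii) is the main obstacle, and the subtlety is ensuring that the auxiliary label $p$ neither destroys nor creates membership in $\sem{\alpha}_{(\cdot)}$ or $\sem{\beta}_{(\cdot)}$. The forward direction is again immediate. For the converse, assume $\alpha\not\subseteq^2\beta$, take a witness $(n,m)\in\sem{\alpha}_T\setminus\sem{\beta}_T$, and extend $T$ to $T'$ by attaching a fresh node $m^\star$ as a child of $m$ whose label set is $\{p\}$. I will verify two claims: (a) $(n,m)$ is still in $\sem{\alpha}_{T'}$, because the witnessing path uses only old nodes and their labels are unchanged (here it matters that negation is restricted to labels, so the new $p$-node does not affect $\neg q$-tests at old nodes); and (b) in $T'$, any path witnessing $(n,m')\in\sem{\beta}_{T'}$ for some $m'$ having a $p$-child must have $m'=m$, since $m^\star$ is the unique node labeled $p$ and is only a child of $m$. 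Combined with forwardness, a $\beta$-path in $T'$ ending at $m$ cannot pass through $m^\star$ (it could not return), so it is entirely in $T$ and hence witnesses $(n,m)\in\sem{\beta}_T$, contradicting the choice of $(n,m)$. This yields $\alpha\subseteq^2\beta$.

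All three arguments use only that $\RXP$ is forward and that negation is applied only to labels, which are properties preserved under passage from single-labeled to multi-labeled trees; so the closing remark about multi-labeled semantics is obtained by reading the same proofs in the multi-labeled setting, with $\rho(m^\star)=\{p\}$ in part (ii).
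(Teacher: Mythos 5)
Your proposal follows the paper's proof essentially step for step: (iii) is unpacked directly from the semantics of $?\phi$, (i) is proved by passing to the subtree $T.n$ and using that forward formulas are insensitive to the part of the tree outside $T.n$, and (ii) is proved by attaching a fresh $p$-labelled leaf below the output node $m$. Parts (iii) and (i) are fine; in (i) you do not even need the restriction of negation to labels, since forwardness alone guarantees that evaluation at $n$ depends only on $T.n$ (the paper phrases the lifting step as ``monotonicity'', you phrase it as forwardness, but it is the same observation).

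There is, however, a genuine gap in the last step of your argument for (ii), and it is a gap you share with the paper's own proof. From $(n,m)\in\sem{\beta}_{T'}$ you conclude $(n,m)\in\sem{\beta}_T$ on the grounds that the witnessing $\downarrow$-path from $n$ to $m$ avoids $m^\star$. But a witness for a path formula is not just a sequence of nodes: the filters $?\phi$ evaluated along the path --- in particular at the endpoint $m$ itself --- may be satisfied in $T'$ only because of the newly added node $m^\star$, and deleting $m^\star$ goes against monotonicity, so satisfaction need not be preserved. Concretely, take $\alpha=\downarrow$ and $\beta=\downarrow;?\langle\downarrow\rangle\top$ (both are $\TP$ path formulas): then $\langle\alpha\rangle\langle\downarrow\rangle p$ and $\langle\beta\rangle\langle\downarrow\rangle p$ are logically equivalent node formulas, yet $\alpha\not\subseteq^2\beta$ (witness: any edge into a leaf). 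So the right-to-left direction of (ii) cannot be closed along these lines without an extra idea --- e.g.\ controlling what $\beta$'s filters may test at and below the output node, or marking the output node in a way that adds no structure those filters could exploit. Your claims (a) and (b) are correct (though (a) is really monotonicity of label-negation-only formulas under adding a leaf, not a statement about the path); the sentence ``it is entirely in $T$ and hence witnesses $(n,m)\in\sem{\beta}_T$'' is exactly where the argument --- yours and the paper's alike --- breaks.
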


 \begin{proof}
 (i) ($\Rightarrow$) Let $T=(N, E, r, \rho)$ be a tree such that $(r, n)\in \sem{\alpha}_T$. By the assumption, we have $\sem{\alpha}_T\subseteq \sem{\beta}_T$. This implies that $(r, n)\in \sem{\beta}_T$, which was required to show. 
 
 ($\Leftarrow$) Let  $T=(N, E, r, \rho)$ be a tree, $n_1$ and $n_2$ in $N$ such that $(n_1, n_2)\in \sem{\alpha}_T$. 
  Since $\alpha$ is from the forward fragment of Regular XPath, we have that $(n_1, n_2)\in \sem{\alpha}_{T.n_1}$. Using the assumption, we have $(n_1, n_2)\in \sem{\beta}_{T.n_1}$. Then by monotonicity, we obtain $(n_1, n_2)\in \sem{\beta}_T$, which was desired. 
 
 (ii) ($\Rightarrow$)
 Suppose $\alpha \subseteq^2 \beta$, and let $T, n \models \tup{\alpha}\tup{\downarrow} p$. Thus there is a descendant $m$ of $n$ in $T$ such that
  $(n,m) \in \sem{\alpha}_T$ and $T, m \models \tup{\downarrow}p$. By the hypothesis we
  therefore have $(n,m) \in \sem{\beta}_T$ and thus $T, n \models
  \tup{\beta}\tup{\downarrow}p$.

 ($\Leftarrow$) Assume $\tup{\alpha}
  \tup{\downarrow} p \subseteq \tup{\beta}
  \tup{\downarrow} p$. Consider a tree $T=(N, E, r, \rho)$ and
  a pair $(n,m) \in \sem{\alpha}_T$. Since $p$ does not occur in
  $\alpha$, we may assume that $T, n\not\models p$ for every node $n\in N$. We then define the
  tree $T'=(N', E', r, \rho')$, where 
  \begin{itemize}
  \item $N':=N\cup \{x\}$,
  \item $E':=E'\cup \{\tup{m,x}\}$,
  \item $\rho'(y):= \left\{ 
  \begin{array}{l l}
   p & \quad \text{if\ } y= x,\\
   \rho(y) & \quad \text{otherwise.}
  \end{array} \right.
$
  \end{itemize}
 
By definition of $T'$, we have $T', m\models \tup{\downarrow} p$ and $(n, m)\in \sem{\alpha}_{T'}$.  
Hence, $T', n\models \tup{\alpha}\tup{\downarrow}p$. By the assumption, it implies that $T', n\models \tup{\beta}\tup{\downarrow}p$.  
Since $p$ holds at $x$ only, and $x$ is a child of $m$, we have that $(n, m)\in \sem{\beta}_{T'}$.  By definition of $T'$, the path between $n$ and $m$ is also in $T$. Thus, $(n, m)\in \sem{\beta}_T$, as desired.  

Item (iii) easily follows from the definitions. 
 \end{proof}

\noindent In light of Proposition~\ref{prop:output}, from now on we consider the containment problem of node formulas only. 
We now give an interesting example of $\CTP$ containment.

\begin{exa}
Let us consider the $\CXP$ node formulas
\begin{eqnarray*}
  \label{eq:2}
 \phi&=&\until{(\until{b}{d})}{\until{a}{b}} \ \text{and}\\
 \psi &=&\until{(\until{c}{d})}{\until{a}{b}}.
\end{eqnarray*}
 Although it is hard to see
 from the first glance, $\phi\subseteq \psi$ holds.
 Indeed, in every
 model $T$ of $\phi$ either there is a direct child of the root where
 $\until{a}{b}$ holds, or there is a path $r=v_1,\ldots, v_n$ in $T$ to
 the node $v_n$ where $\until{a}{b}$ holds and $\until{b}{d}$ holds in
 every node $v_i$ ($1< i< n$) on the path. In the first 
 case, $\psi$ holds at the root since $\until{a}{b}$ holds at the direct
 child.  

 Now let's consider the second case. Let $j\in \{2,\ldots,n-1\}$ be
 the least number with the property that $v_j$ has a non-empty $b$-path to the $d$-descendant. If there is no such number,
 then $\until{c}{d}$ holds at every $v_i, 1< i<n$ (as each of them
 has a direct $d$-child) and, thus, $T,r\models \psi$.  

Assume there is such a $j$. Since $v_j$ has a $b$-node as a child, we have that $T,v_j\models \until{a}{b}$. Moreover, $\until{c}{d}$ holds at each $v_i$ for  $i< j$ since $v_i$ has a $d$-node as a child.
 Thus in this case we obtain that $T,r\models \psi$ holds too.   
\end{exa}

\newcommand{\down}{\downarrow}

\section{Expressivity}
\label{sec:expr}

We extend the semantics of $\TP$ given by embeddings of queries into trees to $\CTP$. Instead of embeddings we need a simulation known from temporal logic.

\subsection{Interpreting Conditional Tree Patterns by simulations}
The semantics of conditional tree patterns can be defined using 
simulations developed for LTL \cite{blac:moda01}. 
These simulations generalize the embeddings for tree patterns from
\cite{amer:tree02,mikl:cont04} with an additional clause for checking the
labels on the edges.

We start with defining the tree pattern analogues of $\CTP$ node and
path expressions.
\begin{defi}
A  \emph{conditional tree pattern} is a 
 finite tree $(N, E , r, \bar{o}, \rho_N, \rho_E)$ with labeled nodes
 and edges, where $N$ is  the set of nodes of
 the tree, 
 $E\subseteq N\times N$ is the set of  edges, 
 $r$ is the root of the tree, $\bar{o}$ is a $k$-tuple ($k> 0$) of output nodes,
 $\rho_N$   is the function assigning to each node in $N$ a finite set of
 labels from $\Sigma$ 
 and $\rho_E$ is
 the function assigning to each pair in $E$ either $\downarrow$ or a
 Boolean conditional tree pattern.
\\
A \emph{Boolean conditional tree pattern} is a conditional tree
pattern with a single output node which equals to the root.
\\
A conditional tree patterns is said to have \emph{multiple output nodes} if the number $k=|\bar{o}| $ is greater than 1.
\\ A \emph{tree pattern} is a conditional tree pattern whose edges are
only labeled by $\downarrow$ or $\top$.
\end{defi}

To be  consistent with the pictorial representation of $\TP$, in
$\CTP$ patterns an edge labeled with $\downarrow$ is drawn as a single
line, while an edge labeled with a $\CTP$ node formula is drawn as a
double line with a $\CTP$ pattern as the label (e.g.  as in
Figure~\ref{fig:CTP1}). The output nodes have the square shape.

$\CTP$ node and path expressions can be translated into (Boolean)
conditional tree patterns with one output node and vice-versa. The translations are given in Appendix~\ref{app:tra}.
We denote the equivalent (Boolean) conditional tree pattern of  a
$\CTP$ path or node  expression $\alpha$ or $\phi$ by 
 $\ct(\alpha)$ and $\ct(\phi)$, respectively. 

Next we generalize the notion of $\TP$-embeddings to $\CTP$-simulations. 
\begin{defi}
\label{def:sim}
Let $T=(N, E, r, \bar{o},\rho_N,
 \rho_E)$ be a conditional tree pattern as in the previous definition and $T'=(N', E',
 r', \rho')$ a tree. A total function $f: N \to N'$ 
 is called a \emph{simulation} from the pattern $T$  into the pointed tree
 $T',r'$ if it satisfies the following properties:
\begin{description}
\item[root preserving]  $f(r)=r'$;
\item[label preserving] if   $p \in \rho_N(n)$, then $p \in \rho'(f(n))$;
\item[child edge preserving] if  
  $nEn'$ and $\rho_E(n,n')= '\downarrow'$, then $f(n)Ef(n')$;

\item[conditional edge simulation] if  
  $nEn'$ and $\rho_E(n,n')$ is not equal to $\downarrow$, then $f(n)E'^+f(n')$ and 
        for every $x$ such that
      $f(n)E'^+x E'^+f(n')$ there is a simulation  from the Boolean
      conditional tree pattern $\rho_E(n,n')$
      into  $T'.x$ (the subtree of $T'$ rooted at $x$).
\end{description}
\end{defi}
When the pattern is a tree pattern, $\downarrow$ and $\top$ are the
only labels on edges. For the label $\top$, the 'conditional edge
simulation' clause trivializes to checking that $f$ is an embedding
for the descendant edges. Simulations for tree patterns are thus
equivalent to tree pattern embeddings \cite{amer:tree02,mikl:cont04}.

The next theorem states that simulations can be used to evaluate
conditional tree patterns.

\begin{thm}\label{prop:simc}
Let $\phi$ and $\alpha$ be a $\CXP$ 
 node and path expression, respectively. Let  $T$ be a
 tree and $n$ a node in $T$.
 \begin{enumerate}[label=(\roman*)]
\item $T,n
 \models \phi$ if and only if there is a simulation from $\ct(\phi)$ into $T.n$.

\item $(n,n')\in \sem{\alpha}_T$ if and only if there is a simulation
from $\ct(\alpha)$ into $T.n$ which relates the output node of
$\ct(\alpha)$ to node $n'$.
\item  Items (i) and (ii)  also hold when $T$ is an infinite tree.
\end{enumerate}
\end{thm}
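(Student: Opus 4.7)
The natural plan is to prove (i) and (ii) simultaneously by induction on the structure of the $\CTP$ formula, and then observe that nothing in the argument uses finiteness of the host tree, so (iii) comes for free.

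For (ii), I would first unfold the definition of $\ct(\alpha)$ on each path constructor. For $\alpha=\ar$, $\ct(\alpha)$ has one root edge labelled $\ar$ and simulations trivialise to the existence of a child $n'$ of $n$, matching $\sem{\ar}_T$. For $\alpha=?\phi$, the pattern is $\ct(\phi)$ with root and output collapsed, and the equivalence reduces, via item (i), to the definition of $\sem{?\phi}_T$. For $\alpha;\beta$ one glues $\ct(\alpha)$ and $\ct(\beta)$ at the output of $\alpha$/root of $\beta$, and a simulation of the composite splits into simulations of the two parts by taking the image of the glueing node as witness. The interesting case is $\alpha=(\ar;?\phi)^*;\ar$: here $\ct(\alpha)$ is a two-node pattern whose single edge is a conditional edge labelled by the Boolean pattern $\ct(\phi)$. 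A simulation $f$ with $f(r)=n$ and $f(o)=n'$ forces $nE^+n'$, and for every $x$ with $nE^+xE^+n'$ gives a simulation from $\ct(\phi)$ into $T.x$; by the induction hypothesis (i) applied to $\phi$, this is exactly $T,x\models\phi$, which matches the semantics $\sem{(\ar;?\phi)^*;\ar}_T$.

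For (i), the cases $\phi=p$, $\phi=\top$ and $\phi=\psi_1\wedge\psi_2$ are immediate from the definition of simulation (for the conjunction, $\ct(\psi_1\wedge\psi_2)$ is obtained by fusing $\ct(\psi_1)$ and $\ct(\psi_2)$ at the root, and any simulation restricts to simulations of the conjuncts, while conversely two simulations that agree on the root glue to a single one). The case $\phi=\langle\alpha\rangle\psi$ reduces to (ii) together with the induction hypothesis for $\psi$: a simulation of $\ct(\langle\alpha\rangle\psi)$ is precisely a simulation of $\ct(\alpha)$ into $T.n$ sending the output of $\ct(\alpha)$ to some $n'$, together with a simulation of $\ct(\psi)$ into $T.n'$.

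For (iii), I would simply inspect the clauses in Definition~\ref{def:sim}: each condition (root preserving, label preserving, child edge preserving, conditional edge simulation) is a local requirement about nodes and edges of the \emph{pattern} (which is finite) and, for the conditional edges, quantifies over descendants of the image in the host tree. Neither the inductive definitions of $\sem{\cdot}_T$ and $\models$, nor the construction and decomposition of simulations used in (i)--(ii), rely on the tree being finite, so the equivalences proved by induction on the pattern carry over verbatim to infinite $T$.

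The only mildly delicate step is the conditional-edge case of (ii), because one must check that the universal quantification over intermediate nodes $x$ in the simulation clause lines up exactly with the universal reading of $(\ar;?\phi)^*$; this is where care is needed in stating $\ct(\cdot)$ so that a $\CTP$ path $(\ar;?\phi)^*;\ar$ is translated to a single conditional edge labelled $\ct(\phi)$ rather than an unfolded chain, so the inductive hypothesis on $\phi$ can be applied uniformly at each intermediate node.
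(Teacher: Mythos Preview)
Your proposal is correct and follows exactly the approach the paper indicates: a mutual induction on the structure of node and path expressions, with the conditional-edge case handled via the inductive hypothesis on the edge label $\phi$. The paper itself only states ``The proof is by mutual induction on the node and path expressions'' without spelling out the cases, so your write-up is in fact more detailed than what appears there.
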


The proof is by mutual induction on the node and  path expressions.
\begin{rem}
\label{rem1}
  The notions of (conditional) tree pattern and embedding for tree
  patterns and simulation for conditional tree patterns are easily
  extended to work for expansions of these languages with negated
  labels (denoted by $\TP^{\neg}$ and $\CTP^{\neg}$, respectively). 
  
  For (conditional) tree patterns, add a second node labelling
  function $\rho_N^{\neg}(\cdot)$ which also assigns each node a finite set
  of labels. These are interpreted as the labels which are false at
  the node.

  For the embeddings and simulations, in Definition~\ref{def:sim} we add a clause stating that also
  negated labels are preserved: 
  \begin{description}
  \item[negated label preserving] If $ p\in \rho_N^{\neg}(n)$, then $p\not\in \rho'(f(n))$.
  \end{description}

  With these modifications, Theorem~\ref{prop:simc} also holds for
  $\TP^{\neg}$ and   $\CTP^{\neg}$. 
\end{rem}


\subsection{Expressivity characterization}

In this section we give expressivity characterization for $\CTP$ similar to the one for various fragments of XPath in \cite{bene:stru05}. The exact logical characterization allows one to compare different fragments of (Regular) XPath and derive non-trivial closure properties such as closure under intersection. 
We show that $\CTP$ path formulas correspond to a natural fragment of first order logic ($\mathbf{FO}$) and are closed under intersections. 

For $\phi$ a formula, let $\desphi{\phi}(x,y)$ be an abbreviation of the "until" formula $desc(x,y) \wedge \forall z(desc(x,z)\wedge desc(z,y)\rightarrow \phi(z))$. 
Let $\FOCTP$ be the fragment of first order logic built up from the binary relations $child$ and  $desc$, label predicates $p(x)$ for each label $p\in \Sigma$ and  equality $'{=}'$, by closing under $\wedge,\ \vee$ and $\exists$ as well as under the rule:
\begin{eqnarray}
  \text{if } \phi(x)\in \FOCTP, \text{ then } 
 \desphi{\phi}(x,y) \in \FOCTP. \label{crule}
\end{eqnarray}
We use $\FOCTP(c,s)$ to denote $\FOCTP$ formulas with exactly two
variables $c$ and $s$ free.   
Following \cite{bene:stru05}
we restrict the $\mathbf{FO}$ fragment to its downward fragment $\FOCTPd$
by requiring that every bound variable as well as $s$ is syntactically
restricted to be a descendant of $c$ or equal to $c$.

Note that $\FOCTPd$ without the rule \eqref{crule} is the logic \linebreak
$\FOBFKd$ from  \cite{bene:stru05}, and shown there to be
equivalent to unions of tree patterns.
Even though $\FOCTPd$ does not contain negation, not every formula is satisfiable (e.g. $desc(x,x)$). 

We can now characterize  conditional tree patterns in terms of
$ \FOCTPd$.

\begin{thm}
  \label{thm:exp1}
   The following languages are equivalent in expressive power:
  \begin{itemize}
  \item unions of the false symbol and $\CTP$ paths  which can have  disjunctions in the node formulas
  \item $ \FOCTPd$.
  \end{itemize}
\end{thm}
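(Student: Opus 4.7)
My plan is to prove the equivalence by giving two translations and checking that each preserves the semantics, following the general methodology used in~\cite{bene:stru05} for the $desc$-free fragment and extending it to handle the conditional descendant rule.

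The easy direction is $\CTP^\vee \cup \{\bot\} \subseteq \FOCTPd$. I would define, by mutual induction on $\CTP$ node formulas and path formulas, translations $(\cdot)^\ast$ with $\phi^\ast(x) \in \FOCTPd$ and $\alpha^\ast(c,s) \in \FOCTPd$. The atomic and compositional cases ($\top$, $p$, $\wedge$, $\vee$, $\downarrow$, $?\phi$, $\alpha;\beta$, $\tup{\alpha}\phi$) are straightforward. The only interesting clause is
\[
\bigl(\tup{(\downarrow;?\phi)^*;\downarrow}\psi\bigr)^{\!\ast}(x) \;:=\; \exists y\bigl(\desphi{\phi^\ast}(x,y)\wedge \psi^\ast(y)\bigr),
\]
which matches the semantics exactly. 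I would then verify by a simple induction that every bound variable in the image is syntactically forced to be a descendant of $c$ (or of the nearest enclosing variable), so the output indeed lies in $\FOCTPd$. Unions of paths translate as disjunctions, and $\bot$ maps to any unsatisfiable $\FOCTPd$-formula such as $desc(c,c)$.

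The harder direction is $\FOCTPd \subseteq \CTP^\vee \cup \{\bot\}$. I would proceed by induction on the structure of the $\FOCTPd$ formula, mutually showing that: (a) every formula $\phi(x) \in \FOCTPd$ with one free variable is equivalent to a $\CTP^\vee$ node formula (or $\bot$), and (b) every $\phi(c,s)\in \FOCTPd(c,s)$ is equivalent to a union of $\CTP^\vee$ paths (or $\bot$). Atomic formulas translate directly: $child(c,s)\mapsto \downarrow$, $desc(c,s)\mapsto \downarrow^+$, $p(c)\mapsto ?p$, $p(s)\mapsto\, ?\top;\downarrow^+;?p$ etc., while $child(c,c)$ and $desc(c,c)$ are unsatisfiable and go to $\bot$. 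Disjunctions become unions. For the rule $\desphi{\phi}(c,s)$, by induction hypothesis $\phi$ is already a node formula $\tilde\phi$, and the whole formula is equivalent to the path $(\downarrow;?\tilde\phi)^*;\downarrow$, which is exactly the $\CTP$ conditional descendant step.

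The main obstacles will be the conjunction and existential quantifier cases of the induction. For conjunction of two unions, I would distribute to get pairwise intersections $\alpha_i \wedge \beta_j$ of $\CTP^\vee$ paths; closure of $\CTP^\vee$ under such intersections is the non-trivial ingredient that the paper establishes in the preceding material (the analogue of the result that $\TP$'s are closed under intersection, lifted to $\CTP^\vee$ using simulations in place of embeddings), and I would invoke it here. The existential quantifier $\exists z.\psi(c,s,z)$ is the delicate point, since the induction has to descend through formulas with more than two free variables; I would handle this by first reducing multi-output-node $\CTP^\vee$ patterns to single-output ones via the reduction noted in the comparison between $\TP$ and $\CTP$, which lets me treat $\psi(c,s,z)$ as a multi-output pattern with outputs at $s$ and $z$ and then existentially project away $z$ by dropping the corresponding output marker. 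Because $\FOCTPd$ forces $z$ to be a descendant of $c$, this projection stays within the fragment. Finally I would verify the semantic equivalence of both translations by a routine induction, using Theorem~\ref{prop:simc} to pass between $\CTP^\vee$ patterns and their model-theoretic meaning.
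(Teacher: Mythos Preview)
Your easy direction is fine and matches the paper. The hard direction has a genuine gap.

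You want to handle conjunction by invoking closure of $\CTP^\vee$ paths under intersection ``that the paper establishes in the preceding material.'' It does not. In this paper, closure under intersection (Theorem~\ref{thm:exp3}) is stated \emph{after} Theorem~\ref{thm:exp1} and is explicitly derived as a corollary of it. Using it as an input to the proof of Theorem~\ref{thm:exp1} is circular. The paper's actual argument does the work that intersection-closure would encapsulate: it normalises an $\FOCTPd$ formula to a disjunction of conjunctive queries $\exists\bar x\,\phi(c,s)$, builds the graph on the variables induced by the $\desphi{\psi}(x,y)$ atoms, and then eliminates undirected cycles by the three-way case split ($z_1=z_2$, $z_1E^+z_2$, $z_2E^+z_1$), each case strictly shrinking the cycle. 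That cycle-elimination is precisely the missing ``intersection'' step, and you cannot skip it.

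Your treatment of $\exists$ is also off. You appeal to the reduction ``multi-output to Boolean'' noted in the $\TP$/$\CTP$ comparison, but that reduction (Proposition~\ref{prop:boolean}) concerns the \emph{containment problem}, not expressive equivalence: it does not say a $k$-output pattern is equivalent to a $1$-output one (it cannot be, the arities differ). If you want a structural induction, you must strengthen the hypothesis to arbitrary tuples of free variables and show that every $\FOCTPd$ formula in variables $x_1,\dots,x_k$ (with the downward constraint) is a union of $k$-output $\CTP^\vee$ patterns; then projection handles $\exists$, but conjunction again forces you to intersect $k$-output patterns, i.e., to carry out the cycle-elimination argument. The paper sidesteps the bookkeeping by going to prenex DNF first and doing the graph argument once, rather than threading a multi-arity induction.
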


\begin{proof}
A standard translation $TR_{xy}(\cdot)$ turns any $\CTP$ path formula $\alpha$ into an equivalent formula in $\FOCTPd$. The translation is essentially the semantics of path and node formulas written in the first order language. 

\begin{center}
\begin{tabular}{l c l}
$TR_{xy}(\emptyset)$& = & $child(x,y)\land x=y$ \\
$TR_{xy}(\downarrow)$& =  & $child(x,y)$ \\
$TR_{xy}(?\phi)$ & = & $ x=y \land TR_x(\phi)$ \\
$TR_{xy}(\alpha_1; \alpha_2)$ &  =  &$\exists z. (TR_{xz}(\alpha_1)\land TR_{zy}(\alpha_2))$, \\
& & where $z$ is a fresh variable \\
$TR_{xy}(\untilpath)$ & = & $\desphi{TR_z(\phi)}(x,y)$  \\
$TR_{xy}(\alpha_1\cup \alpha_2)$ & = & $TR_{xy}(\alpha_1)\lor TR_{xy}(\alpha_2)$ \\
& & \\
$TR_{x}(p)$& =  & $p(x)$ \\
$TR_{x}(\top)$ & = & $x=x$ \\
$TR_{x}(\phi_1\land \phi_2)$ &  =  &$TR_x(\phi_1)\land TR_x(\phi_2)$ \\
$TR_{x}(\phi_1\lor \phi_2)$ & = & $TR_x(\phi_1) \lor TR_x(\phi_2)$ \\
$TR_{x}(\tup{\alpha}\phi)$ & = & $\exists y. (TR_{xy}(\alpha)\land TR_y(\phi))$, \\
& & where $y$ is a fresh variable.
\end{tabular} 
\end{center}

By definition,   $TR_{cs}(\alpha)$ is a formula in $ \FOCTPd$ and equivalent to $\alpha$. 

For the other direction, let $\theta\in \FOCTPd$. We use our earlier notation $\desphi{\psi}(x,y)$ for the "until" formulas. First we introduce two special formulas $\top(x)$ and $\bot(x)$ which stand for $x=x$ and $desc(x,x)$, respectively.  

We will modify $\theta$ in several  steps. First we replace $child(x,y)$ and $desc(x,y)$ by the equivalent $\desphi{\bot}(x,y)$ and $\desphi{\top}(x,y)$, respectively. Then eliminate all equalities by renaming variables. 
Bring all existential quantifiers inside $\theta$ to the front,  bring the body into disjunctive normal form and distribute the disjunctions over the quantifiers. We then end up with a disjunction of formulas of the form $\exists\bar{x}\phi(c,s)$, with $\phi$ a conjunction of formulas $\desphi{\psi}(x,y)$, $p(x)$, $\top(x)$ and $\bot(x)$. 
If $\phi$ contains $\bot$ replace $\exists\bar{x}\phi(c,s)$ with $\bot$.  

As our target language is closed under unions and $\bot$, we only need to translate the "conjunctive queries"  $\exists\bar{x}\phi(c,s)$.   
Consider the graph of the variables in $\phi$ in which two variables $x,y$ are related if $\phi$ contains an atom $\desphi{\psi}(x,y)$. If the graph is cyclic, it cannot be satisfied on a tree and $\exists\bar{x}\phi(c,s)$ is equivalent to $\bot$. If it is a tree, $c$ will be the root.  Assuming that we can rewrite all $\psi$ in the atoms $\desphi{\psi}(x,y)$, we can rewrite it as a conditional tree pattern.
If $\psi$ is a boolean combination of $p(z)$, $\top(z)$ and $\bot(z)$ atoms, this is not hard:   Simply bring it into disjunctive normal form, and remove each conjunct containing $\bot$. If the result is not empty,    then translate to a union of (trivial) conditional tree patterns. If it is empty, translate $\desphi{\psi}(x,y)$ to a child step.
If $\psi$ contains subformulas of the form  $\desphi{\phi}(x,y)$, we  apply the current  procedure   to it.

Thus assume that the variable graph is a directed acyclic graph. We  eliminate undirected cycles step by step. The length of the cycle equals the number of variables in it. Consider that the graph contains two paths $\pi_1$ and $\pi_2$ both going from $x$ to $y$ and without other common variables. 
We show that this subgraph is equivalent to a union of subgraphs with no or smaller cycles.  

In the simplest case, both paths are of length 1 and thus consist of a $\desphi{\psi}(x,y)$ atom. But then we can use the following equivalence to remove the cycle:
\begin{equation}
 \desphi{\phi}(x,y)\wedge \desphi{\psi}(x,y) \equiv \desphi{\psi\wedge\psi}(x,y)
\end{equation}
So assume the cycle looks like Figure~\ref{fig:paths}.

\begin{figure}[H]
  \centering
  \includegraphics[scale=1.2]{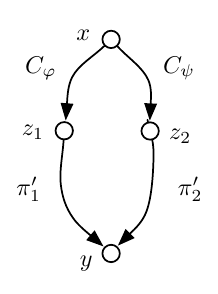}
  \caption{undirected cycle in $\phi$} 
  \label{fig:paths}
\end{figure}

 If this is satisfied on a tree $(N,E, r,\rho)$ with assignment $g$, there are three possibilities: $g(z_1)=g(z_2)$, $g(z_1)E^+ g(z_2)$ or $g(z_2)E^+g(z_1)$. In each case our original formula is equivalent to  one with a smaller cycle. 
The three possibilities are 
\begin{description}
\item[when $g(z_1)=g(z_2)$] $\desphi{\phi\wedge\psi}(x,z_1) \wedge z_1\pi_1'y \wedge z_1\pi_2'y$ 
\item[when $g(z_1)E^+ g(z_2)$]  $ \desphi{\phi\wedge\psi}(x,z_1) \wedge \psi(z_1) \wedge z_1\pi_1'y \wedge \desphi{\psi}(z_1,z_2)\wedge z_2\pi_2'y$
\item[when $g(z_2)E^+g(z_1)$]  $ \desphi{\phi\wedge\psi}(x,z_2) \wedge \phi(z_2) \wedge z_2\pi_2'y \wedge \desphi{\phi}(z_2,z_1)\wedge z_1\pi_1'y$.
\end{description}
In the first case the length of the  cycle decreased by two, in the two other cases by one.

Thus their disjunction is equivalent to the  formula of Figure~\ref{fig:paths}. 
 We replace that formula by this disjunction, bring the result  in disjunctive normal form and distribute the disjuncts out. We again have a disjunction of  "conjunctive queries". As the new cycles are smaller, this procedure will terminate, and results in  a (big) disjunction of trees.
 \end{proof}

An important consequence of this result is that
$\CTP$ patterns are closed under intersection:

\begin{thm}
  \label{thm:exp3}
  The intersection of two $\CTP$ paths is equivalent to $\bot$ or a union of $\CTP$ paths. 
\end{thm}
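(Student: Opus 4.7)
The plan is to use the expressivity characterization from Theorem~\ref{thm:exp1} to reduce intersection of $\CTP$ paths to conjunction of $\FOCTPd$ formulas, exploiting the fact that the target fragment of first order logic is syntactically closed under $\wedge$, and then translate back.

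Concretely, let $\alpha_1$ and $\alpha_2$ be two $\CTP$ paths. First I would apply the translation $TR_{cs}(\cdot)$ from the proof of Theorem~\ref{thm:exp1} to obtain formulas $\theta_i := TR_{cs}(\alpha_i) \in \FOCTPd(c,s)$ with $\sem{\alpha_i}_T = \{(n,m) \mid T \models \theta_i[c\mapsto n,\,s\mapsto m]\}$ for every tree $T$. Thus the intersection $\sem{\alpha_1}_T \cap \sem{\alpha_2}_T$ is defined by $\theta_1 \wedge \theta_2$.

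Second, I would check that $\theta_1 \wedge \theta_2$ lies in $\FOCTPd$. Inspecting the grammar of $\FOCTP$, the class is closed under $\wedge$; the downward syntactic restriction (every bound variable and the free variable $s$ is constrained to be a descendant of $c$ or equal to $c$) is also clearly preserved by conjunction, provided bound variables of $\theta_1$ and $\theta_2$ are first renamed apart. Then the right-to-left direction of Theorem~\ref{thm:exp1} yields that $\theta_1 \wedge \theta_2$ is equivalent to $\bot$ or to a union of $\CTP$ paths (possibly with disjunctions in the node formulas), which is exactly what is claimed.

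I do not expect any major obstacle: the only subtlety is making sure the downward restriction genuinely survives conjunction, which amounts to a routine bound-variable renaming. All the real work was already carried out in Theorem~\ref{thm:exp1}, and this corollary simply exploits the closure of the intermediate first order fragment under $\wedge$. Note that in the degenerate case where $\theta_1 \wedge \theta_2$ becomes unsatisfiable after the cycle-elimination procedure of Theorem~\ref{thm:exp1} (for instance if the two patterns impose incompatible child structure at some node), the procedure produces $\bot$, accounting for the first disjunct in the statement.
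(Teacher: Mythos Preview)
Your proposal is correct and is exactly the paper's argument: Theorem~\ref{thm:exp3} is stated right after Theorem~\ref{thm:exp1} as ``an important consequence'' with no separate proof, relying on the closure of $\FOCTPd$ under $\wedge$ that you invoke.

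One small imprecision is worth flagging. Theorem~\ref{thm:exp1} translates back to unions of $\CTP$ paths \emph{which can have disjunctions in the node formulas}, whereas Theorem~\ref{thm:exp3} as stated speaks of plain $\CTP$ paths, and the paper explicitly notes elsewhere that $\CTP$ with disjunction is strictly more expressive than unions of $\CTP$. So your parenthetical ``(possibly with disjunctions in the node formulas), which is exactly what is claimed'' overshoots slightly. The paper glosses over this too; to match the literal conclusion, observe that when the inputs $\theta_1,\theta_2=TR_{cs}(\alpha_1),TR_{cs}(\alpha_2)$ are disjunction-free, the cycle-elimination rewriting in the proof of Theorem~\ref{thm:exp1} only ever forms conjunctions $\phi\wedge\psi$ inside the $desc_{(\cdot)}$ labels, and such a conjunction of two tree-shaped one-variable formulas is again tree-shaped (glued at the root variable), so the recursive call introduces no further disjunctions. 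Hence every disjunction produced lands at the outermost level and becomes the union.
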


Theorem~\ref{thm:exp1} together with the translation of Regular XPath
into  $\mathbf{FO}^*(c,s)$ from \cite{cate:expr06} implies that every union of $\CTP$ patterns with
disjunctions is in the intersection of first order logic and  positive
downward $\mathbf{FO}^*(c,s)$. It is an intriguing open problem whether the
converse also holds.

\section{Containment}
\label{sec:cont}
Before we determine the exact complexity of  the containment problem for $\CTP$ and expansions we prove a number of reductions. Most are generalizations from $\TP$ to $\CTP$. The key new result is the encoding of negated labels in $\CTP$.

\subsection{Containment Preliminaries}

The following reductions will be used later in our upper and lower bound proofs.

\subsubsection*{Multiple output nodes}
\label{sec:mult-outp-nodes}

The main difference between tree patterns and their XPath formulation
is that tree patterns can have multiple output
nodes.
Kimelfeld and Sagiv~\cite{kime:revi08} (Proposition~5.2) show that for the $\TP$ containment problem the number of output nodes is not important: the problem can be $\PTIME$ reduced to a containment problem of Boolean $\TP$s. 
 This is achieved, given $\phi, \psi\in \TP$, by adding a child labeled with a new label $a_i$ to every output node $X_i$ in both $\phi$ and $\psi$. Second, to every leaf that is not an output node or a newly added node, we add a child labeled with $\top$. 
 The same result, using the same argument,  holds for $\CTP$.  

\begin{prop}
\label{prop:boolean}
Let $S\subseteq\{\vee, \neg\}$. For $\CTP^S$ patterns with multiple output nodes $\varphi, \psi$  there are $\PTIME$ computable Boolean $\CTP^S$, $\phi', \psi'$ such that 
$\phi\subseteq \psi\ \mbox{iff}\ \phi'\subseteq \psi'.$
The same holds  for multi-labeled trees. 
\end{prop}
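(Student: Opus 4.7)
The plan is to lift the standard $\TP$ reduction of Kimelfeld and Sagiv to $\CTP^{S}$ by working with the pattern/simulation presentation from Section~\ref{sec:expr} and its extension to negated labels (Remark~\ref{rem1}). Concretely, fix fresh labels $a_1, \ldots, a_k$ occurring in neither $\phi$ nor $\psi$ (where $k$ is the number of output nodes, the same in both patterns without loss of generality). Obtain $\phi'$ from $\phi$ by (i) adding, for each output node $X_i$, a new child labeled $a_i$ connected by a $\downarrow$-edge, and (ii) adding a new child labeled $\top$ to every leaf of $\phi$ that is not an output node; then declare the root to be the unique output node. Define $\psi'$ from $\psi$ symmetrically, using the same labels $a_i$. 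The construction is clearly computable in linear time, and it preserves membership in $\CTP^{S}$ for any $S\subseteq\{\vee,\neg\}$ since the added gadgets use only positive labels.

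For the forward direction of the equivalence $\phi\subseteq\psi$ iff $\phi'\subseteq\psi'$, assume $\phi\subseteq\psi$ and let $T$ be such that there is a (root-preserving) simulation $f'$ from $\phi'$ into $T$. Restricting $f'$ to the nodes of $\phi$ gives a simulation witnessing that the tuple $(f'(X_1),\ldots,f'(X_k))$ is an output of $\phi$ in $T$; by the hypothesis it is also an output of $\psi$, giving a simulation $g$ from $\psi$ into $T$ with $g(Y_i)=f'(X_i)$. Each $f'(X_i)$ has an $a_i$-labelled child in $T$ (the image of the corresponding $a_i$-gadget of $\phi'$), which we use to extend $g$ across $\psi'$'s $a_i$-gadgets; the $\top$-gadgets of $\psi'$ are placed under leaves of $\psi$ which, being mapped to nodes in $T$, always have some child available if we first apply monotonicity to extend $T$ at its leaves (the $\CTP$ monotonicity mentioned in the Introduction), or simply observe that $f'$ already witnessed such children in $T$ at the relevant positions when necessary.

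The key direction is the converse, where the $\top$-gadgets and the freshness of the $a_i$ do the real work. Suppose $\phi'\subseteq\psi'$ and that there is a simulation from $\phi$ into some $T$ mapping the output nodes to a tuple $(n_1,\ldots,n_k)$. Let $T'$ be $T$ extended with a fresh $a_i$-labelled child under each $n_i$. Monotonicity of $\CTP^S$-formulas under leaf extensions and the construction of $\phi'$ give a simulation from $\phi'$ into $T'$, so by assumption there is a simulation $h$ from $\psi'$ into $T'$. The crucial claim is that no node of $\psi$ (i.e.\ a non-gadget node of $\psi'$) is sent by $h$ to one of the freshly added $a_i$-leaves of $T'$. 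Indeed, the added $a_i$-leaves have no children in $T'$, so any $\psi$-node mapped there must be a leaf of $\psi'$ too; but in $\psi'$ every leaf of $\psi$ was either turned into an internal node by attaching a $\top$-child, or is an output node whose attached $a_i$-leaf must also be mapped into $T'$ below it (yielding the same contradiction). Because the $a_i$ are fresh, the $a_i$-leaves of $\psi'$ can only be mapped onto the freshly added $a_i$-leaves under $n_i$, forcing $h(Y_i)=n_i$. Restricting $h$ to $\psi$ therefore yields a simulation into $T$ with the required output tuple; satisfaction of the (unchanged) conditional-edge subpatterns transfers from $T'$ to $T$ because they do not mention the fresh labels and $T'$ differs from $T$ only in added leaves.

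The main obstacle is precisely this last argument: guaranteeing that $\psi$-nodes cannot be parked on the auxiliary $a_i$-leaves of $T'$, which is why the $\top$-gadgets must be added under every non-output leaf. The proof is insensitive to whether trees are single- or multi-labeled, and the clauses for negated labels from Remark~\ref{rem1} and for disjunction (treating $\CTP^\vee$ as a union of $\CTP$-patterns and applying the construction uniformly with the same labels $a_i$ across all disjuncts) go through without change, giving the result for every $S\subseteq\{\vee,\neg\}$.
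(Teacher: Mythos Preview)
Your construction and overall strategy match the paper exactly: it too cites Kimelfeld--Sagiv, describes the same gadgets (an $a_i$-child under each output node, a $\top$-child under each non-output leaf), and then simply asserts that ``the same result, using the same argument, holds for $\CTP$'' without spelling anything out. So at the level of approach there is nothing to distinguish.

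Where you go beyond the paper is in writing out the two directions, and here there are gaps. In the converse direction your tree $T'$---$T$ plus only the fresh $a_i$-children---need not satisfy $\phi'$: you still have to place the $\top$-gadgets of $\phi'$, which requires every $f(L)$ (for $L$ a non-output leaf of $\phi$) to have a child in $T'$, and this fails whenever $f(L)$ is a leaf of $T$ distinct from all $n_i$. The usual repair is to also hang a fresh child under each such $f(L)$ (or under every leaf of $T$). A subtler issue, specific to $\CTP$, is your claim that conditional-edge subpatterns ``transfer from $T'$ to $T$ because they do not mention the fresh labels and $T'$ differs from $T$ only in added leaves'': monotonicity of positive $\CTP$ formulas runs from the smaller tree to the larger one, not the reverse, so a wildcard leaf \emph{inside} an edge-label pattern could a~priori be sent to one of the freshly added leaves of $T'$. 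To make the restriction step watertight for $\CTP$ (as opposed to plain $\TP$) the $\top$-gadget trick must be applied recursively to the edge-label patterns as well. Finally, in the forward direction neither of your two fixes for placing $\psi'$'s $\top$-gadgets works as stated: extending $T$ at its leaves changes the very tree on which you must establish $\psi'$, and $f'$ gives you no information about where the leaves of $\psi$ land under $g$.
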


\subsubsection*{Disjunctions in the consequent}

We now show that the containment problem
for both unions of $\CTP^{\neg}$ and unions of $\TP^{\neg}$ can be
reduced to containments without unions. 
This is useful in lower bound proofs, as we can use a union in the
consequent to express multiple constraints.

The proof of the following proposition is a
slight modification of the proof  for $\TP$ in
\cite{mikl:cont04}. 
\begin{center}
\begin{figure}[t]
\begin{tabular}{p{0.20\textwidth}p{0.20\textwidth}}
\includegraphics[width=0.20\textwidth]{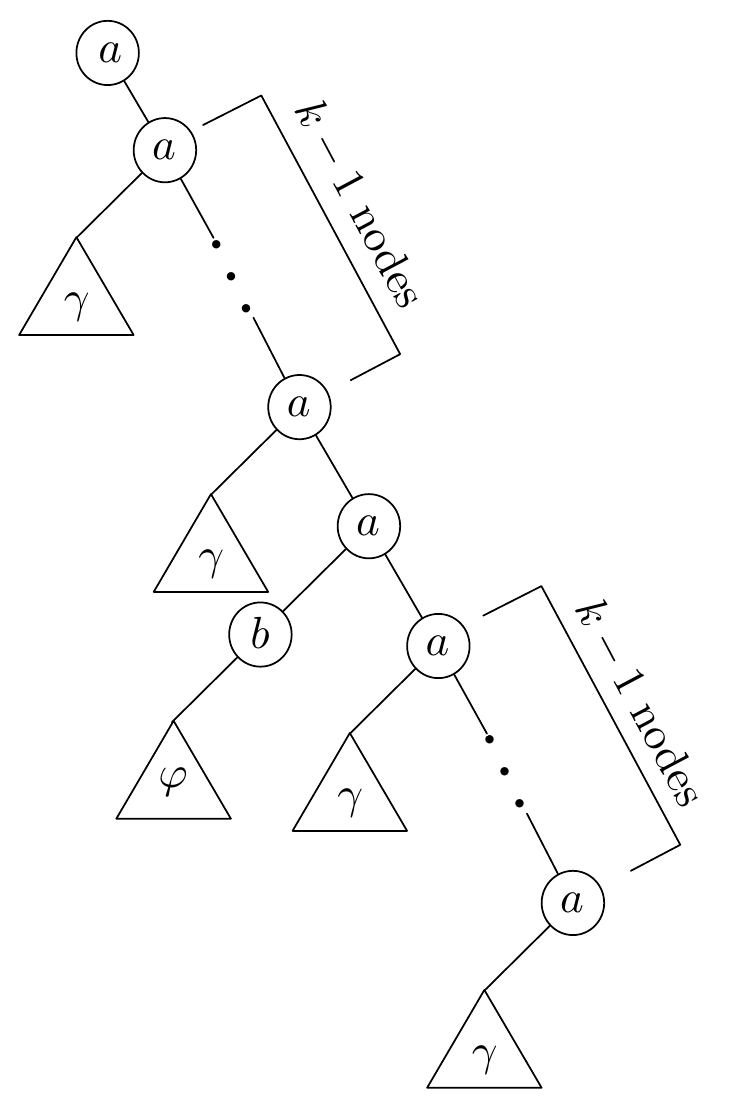}
 & 
\includegraphics[width=0.20\textwidth]{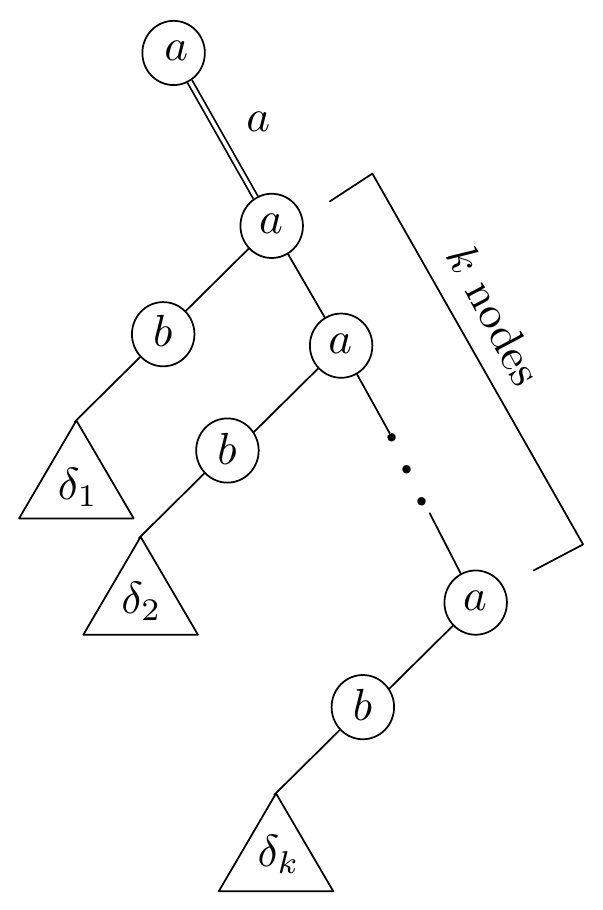}
\\
 \centering
 \ref{fig12}(a) Pattern $\phi'$ 
&
     \centering
     \ref{fig12}(b) Pattern $\psi$
     \\
\end{tabular} 
\caption{Patterns $\phi'$ and $\psi$ from Proposition~\ref{prop:mik}.}
 \label{fig12}
\end{figure}
\end{center}
\begin{figure}[!t]
\begin{center}
\includegraphics[width=0.25\textwidth]{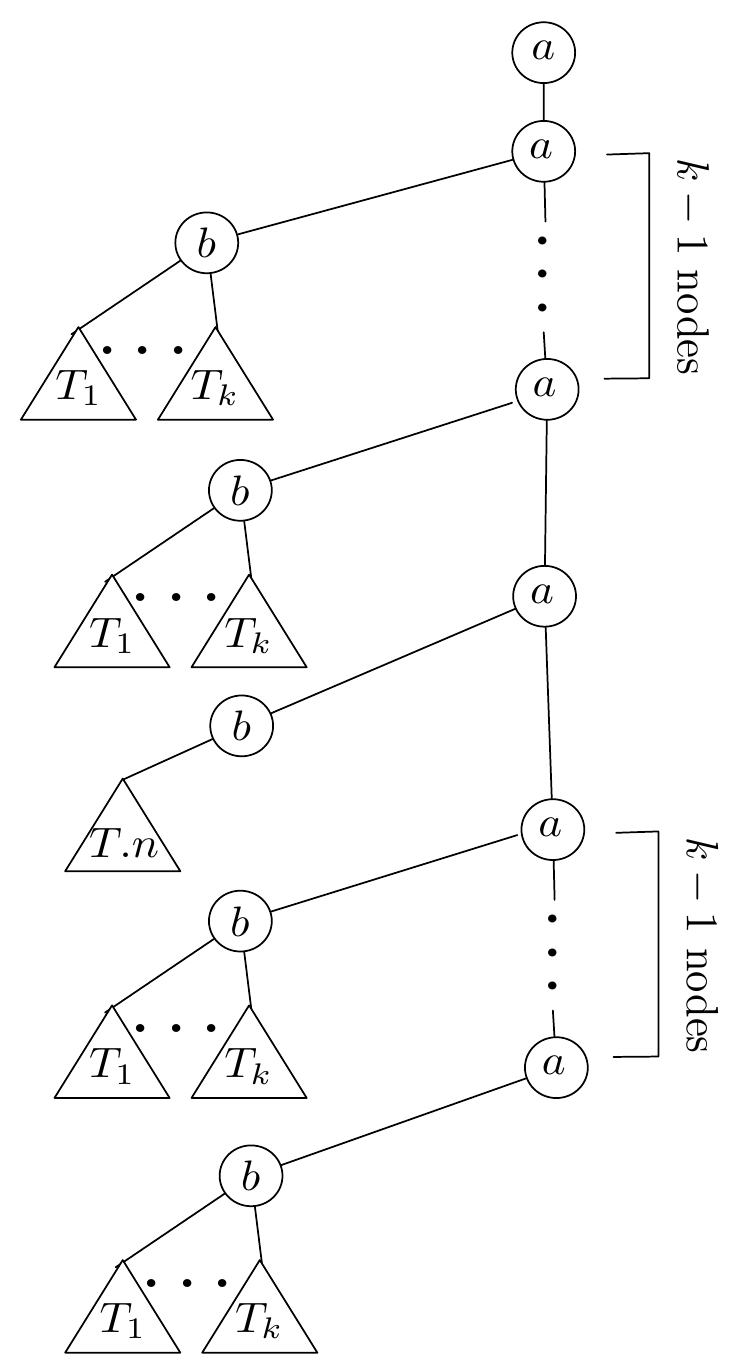}
\caption{Model for pattern $\phi'$  from Proposition~\ref{prop:mik}.}
\label{fig:3}
\end{center}
\end{figure}
\begin{prop}\label{prop:mik}
Let $\phi$ be a $\CXP^{\neg}$ ($\TP^{\neg}$) formula and  $\Delta$ a finite set of
 $\CXP^{\neg}$ (resp. $\TP^{\neg}$) formulas. Then there are $\PTIME$ computable $\CXP^{\neg}$ ($\TP^{\neg}$) formulas $\phi'$ and $\psi$ such that
$\phi \subseteq \bigvee \Delta$ iff 
$\phi' \subseteq \psi$.
\\
If $\phi$ and $\Delta$ are in $\CTP$ ($\TP$),  $\phi'$ and $\psi$ are in $\CTP$ ($\TP$) as well.
\\
The same holds  for  containment over multi-labeled trees. 
\end{prop}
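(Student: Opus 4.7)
The plan is to adapt the original Miklau--Suciu reduction for $\TP$~\cite{mikl:cont04}, suitably modified to handle the extra features of $\CTP$ (the conditional descendant axis) and negated labels. Let $\Delta = \{\delta_1, \ldots, \delta_k\}$. The core idea is to introduce $k$ fresh labels $a_1, \ldots, a_k$ (one per $\delta_i$, not occurring in $\phi$ or any $\delta_i$) that serve as ``choice markers'', translating disjunction on the consequent side into a single conjunctive pattern on both sides.

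Concretely, following the shape suggested by Figures~\ref{fig12}(a) and~\ref{fig12}(b), I would construct $\phi'$ by attaching to $\phi$ an auxiliary branch carrying the markers $a_1, \ldots, a_k$ (for instance, a new child of the root with $k$ sibling subtrees each containing one $a_i$). The consequent $\psi$ is then built from a single common skeleton where, under the corresponding $a_i$-marker, the pattern $\delta_i$ is embedded. The crucial feature of this layout is that any model of $\phi'$ is forced to carry all $k$ markers, while a simulation witnessing $\psi$ only needs to find one matching $a_i$-branch together with its $\delta_i$-witness, thereby encoding the disjunction without introducing $\vee$ into $\psi$. Conditional descendant edges of $\CTP$ are preserved intact by the construction and handled by the simulation semantics from Theorem~\ref{prop:simc}.

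The two directions are verified as follows. For $(\Rightarrow)$, assume $\phi \subseteq \bigvee\Delta$ and take any $T' \models \phi'$. Stripping the marker branch yields $T \models \phi$, hence $T \models \delta_i$ for some $i$; combining the resulting simulation with the $a_i$-marker in $T'$ produces a simulation witnessing $\psi$. For $(\Leftarrow)$, suppose $\phi$ has a model $T$ satisfying none of the $\delta_i$; extend $T$ to $T'$ by fusing the marker branch with $T$ via the operation $\oplus$ from Section~\ref{sec:prelim}. Then $T' \models \phi'$, so by hypothesis $T' \models \psi$; by the design of the skeleton this simulation must pass through exactly one $a_i$-marker and thereby induces a simulation for $\delta_i$ inside the original $T$, a contradiction.

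The main obstacle is engineering $\psi$ so that, even without disjunction, its unique embedding channel in any $\phi'$-model is via some $a_i$-marker, and so that the rest of the simulation decomposes cleanly into a $\delta_i$-simulation on the underlying tree. The construction is polynomial in $|\phi| + \sum_i |\delta_i|$ and uses only label conjunctions together with the axes already present in the source language, so it stays within each of the fragments $\TP$, $\TP^\neg$, $\CTP$, $\CTP^\neg$. For multi-labeled models the same argument applies verbatim, since the freshness of the $a_i$ guarantees they can be added to any tree without interfering with the satisfaction of $\phi$ or any of the $\delta_i$.
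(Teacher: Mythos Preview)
Your proposal has a genuine gap at the most important step: you never actually specify the pattern $\psi$, and the mechanism you sketch with $k$ distinct markers $a_1,\ldots,a_k$ cannot encode a disjunction in a single conjunctive pattern. A simulation of $\psi$ into a tree must map \emph{every} node of $\psi$, so if $\psi$ contains a branch for each $a_i$ with $\delta_i$ below it, then every $\delta_i$ must be witnessed in any model of $\psi$ --- that is conjunction, not disjunction. Conversely, if $\psi$ contains only one branch, it must commit to a fixed $a_i$ and hence checks only one $\delta_i$. Your sentence ``a simulation witnessing $\psi$ only needs to find one matching $a_i$-branch'' is exactly where the argument breaks: simulations are total, they do not ``choose'' a branch.

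The paper's construction (following Miklau and Suciu) uses a completely different, and genuinely non-obvious, \emph{sliding window} idea with only two fresh labels $a$ and $b$. The pattern $\phi'$ is an $a$-labeled path of length $2k$ in which the middle node carries $\phi$ (via a $b$-child) and every other node carries $\gamma=\bigwedge_\ell \delta_\ell^b$. The pattern $\psi$ is an $a$-descendant step followed by an $a$-path of length $k$ in which the $i$-th node carries $\delta_i^b$. The forward direction works because if some $\delta_j$ holds at the $\phi$-node, one can slide the $k$-window so that its $j$-th slot lands on that node; all other slots land on $\gamma$-nodes, which satisfy every $\delta_\ell$. The backward direction builds a canonical model in which the only $a$-path has length exactly $2k$, so by counting, any placement of $\psi$'s $k$-window must put some slot on the $\phi$-node, forcing the corresponding $\delta_j$ to hold there. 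This pigeonhole/offset argument is the missing idea in your plan; without it (or an equivalent device) there is no way for a single conjunctive $\psi$ to simulate a $k$-ary choice.
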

\begin{proof} 
First, for the case of $\TP^{\neg}$, the proof from \cite{mikl:cont04} can be readily applied here (using embeddings which also preserve negated labels, cf Remark~\ref{rem1}).

Now we prove the proposition for $\CTP^{\neg}$. 
For simplicity, we use the same letter to denote
 a $\CXP^{\neg}$ formula and its corresponding tree pattern representation.
Assume $\Delta$ is $\{\delta_1, \dots, \delta_k\}$. If $\iota$ is a
 $\CXP^{\neg}$ pattern, by $\iota^a$ we denote the $\CTP^{\neg}$ pattern
 defined
 as having the root labeled by $a$, whose unique child is the root of
 $\iota$. 
Consider  new labels $a$ and $b$ occurring neither in $\phi$
 nor $\Delta$.
Let   $\gamma=\bigwedge_{0<\ell \leq k}
 \delta^b_\ell$. Note that  $\gamma$ is consistent if every $\delta_\ell$ is consistent.
 Now define  $\phi'$ as the $\CXP^{\neg}$ tree pattern  in
 Figure~\ref{fig12}(a), and  $\psi$ as the $\CXP^{\neg}$ tree pattern  in
 Figure~\ref{fig12}(b).

We verify that $\phi \subseteq \bigvee \Delta$ iff $\phi' \subseteq \psi$. Without loss of generality, we assume that each of $\phi, \delta_1, \dots, \delta_k$ is consistent.

For the direction from left to right, assume $\phi \subseteq \bigvee \Delta$ and $T, n \models \phi'$ for an arbitrary $T,n$. This
 means that there is a path $n=m_0..... m_{2k-1}$ of nodes in $T$ which are labeled with $a$
 and such that:
\begin{itemize}
\item for every $0< \ell \leq k$: $T, m_i \models
      (\delta_\ell^b)^a$, with $i \in \{1, \dots, k-1\} \cup \{k+1,
      \dots, 2k-1\}$,
\item $T, m_k \models (\phi^b)^a$.
\end{itemize}
By the assumption, it follows that there is an index $j$ such that $T, m_k \models (\delta_j^b)^a$. 
Then, by Theorem~\ref{prop:simc}, there is a simulation $f'$ from $(\delta_j^b)^a$ into  $ T.m_k$. 
The simulation $f'$  can be extended to a simulation $f$ from $\psi$
into $T,n$ in the obvious way such that $f$ is a simulation from
$(\delta^b_i)^a$ into $ T,m_{k-j+i}$ and the $a$-descendant edge in
$\psi$ is simulated on the path $m_0, \ldots m_{k-j+1}$. Thus,
$T,n\models \psi$. 

For the other direction, assume  $\phi' \subseteq \psi$ and  $T,n
\models \varphi$ for an arbitrary $T$ rooted at $n$. 
As all $ \delta_\ell$ are consistent, there are trees  $T_\ell,
 n_\ell \models \delta_\ell$, for each $\ell \in \{1, \dots, k\}$. 
Let $T'$ be the tree created from $T.n$ and the $T_\ell$'s depicted in Figure~\ref{fig:3} with root $r$.
Then  $T', r \models \phi'$.
  Because we assumed $\phi' \subseteq \psi$, then also 
 $T', r \models \psi$,  which by Theorem~\ref{prop:simc} implies the
 existence of a simulation $g$
from $\psi$ into $ T'.r$. In particular, the span of $k$ $a$-nodes in
$\psi$ has to be simulated on an $a$-path of length $k$ in $T'$. This
means that for some $j \in \{1, \dots,
k\}$, $g$ is a simulation from $\delta_j$ into $T.n$.  By  
 Theorem~\ref{prop:simc} again, $T, n \models \delta_j$, and
 thus $T, n \models \bigvee \Delta$.
 
 Note that if $\phi$ and $\Delta$ are in  $\CTP$, then the constructed $\phi'$ and $\psi$ are in $\CTP$ too.
\end{proof}
  
\subsubsection*{From single-labeled to multi-labeled trees}  

We now show that the containment problem over single-labeled trees can be reduced to a containment problem over multi-labeled trees. 

\begin{prop}
  \label{prop:standard}
Let $S\subseteq \{\neg,\vee\}$. Given $\CXP^S$  ($\TP^S$) patterns $\phi,\psi$, there are $\PTIME$ computable $\CTP^S$  ($\TP^S$)
patterns $\phi',\psi'$  such that $\phi\subseteq\psi$
iff
$\phi'\mlcont\psi'$.
\end{prop}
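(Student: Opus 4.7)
The plan is to add a disjunct to $\psi$ that is automatically satisfied on multi-labeled counterexamples which are not ``essentially single-labeled''. Let $\Sigma_0$ be the finite set of labels occurring in $\phi$ or $\psi$. Since $\CTP^S$ and $\TP^S$ formulas test only labels in $\Sigma_0$, the truth value of $\phi$ and $\psi$ at a node depends solely on the intersection of that node's label set with $\Sigma_0$. So a multi-labeled tree behaves like a single-labeled one with respect to $\phi,\psi$ as soon as every node carries at most one label from $\Sigma_0$.

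Concretely, I would set $\phi':=\phi$ and
\[
\psi' \,:=\, \psi \,\vee\, \Theta,\qquad
\Theta \,:=\, \bigvee_{\substack{p,q\in\Sigma_0 \\ p\neq q}} \bigl((p\wedge q)\,\vee\, \langle\downarrow^+\rangle(p\wedge q)\bigr),
\]
so that $\Theta$ holds at $n$ iff some node in $T.n$ carries two distinct $\Sigma_0$-labels (with $\Theta$ understood as an empty disjunction, hence vacuously false, when $|\Sigma_0|<2$). For $(\Leftarrow)$, any single-labeled $T,n$ falsifies $\Theta$, so the multi-labeled hypothesis yields $T,n\models\psi$ directly. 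For $(\Rightarrow)$, let $T,n$ be multi-labeled with $T,n\models\phi$. If some node at or below $n$ carries two distinct $\Sigma_0$-labels, then $T,n\models\Theta\subseteq\psi'$. Otherwise, I would build a single-labeled $T^s$ on the same underlying tree by keeping each node's unique $\Sigma_0$-label when present and assigning a fresh label $*\notin\Sigma_0$ otherwise. A simple induction shows $T,m\models\chi \Leftrightarrow T^s,m\models\chi$ for every $\CTP^S$/$\TP^S$ node formula $\chi$ over $\Sigma_0$ and every node $m$, because $\Sigma_0$-labels (and their negations, when $\neg\in S$) coincide pointwise; in particular $T^s,n\models\phi$, the single-labeled hypothesis yields $T^s,n\models\psi$, and hence $T,n\models\psi$.

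When $\vee\in S$, $\psi'$ is already in $\CTP^S$ (resp.\ $\TP^S$) and we are done. When $\vee\notin S$, the disjunctions in $\psi'$ must be eliminated by invoking Proposition~\ref{prop:mik} on the containment $\phi\mlcont\bigvee\Delta$ with $\Delta=\{\psi\}\cup\{p\wedge q,\,\langle\downarrow^+\rangle(p\wedge q) : p\neq q\in\Sigma_0\}$, a polynomially large set of $\CTP^S$ (resp.\ $\TP^S$) formulas; Proposition~\ref{prop:mik} holds over multi-labeled trees and preserves $\CTP^S$/$\TP^S$ for $S\subseteq\{\neg\}$, so it delivers the required $\phi',\psi'$ in polynomial time. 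The only delicate point is the use of a fresh label $*$ when defining $T^s$: single-labeled trees forbid empty label sets while multi-labeled ones allow them, so ``bare'' nodes must receive some label, and choosing $*\notin\Sigma_0$ guarantees the modification cannot affect the truth of $\phi$ or $\psi$.
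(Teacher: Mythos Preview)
Your proposal is correct and follows essentially the same approach as the paper: both add to $\psi$ the disjunction $\bigvee_{p\neq q\in\Sigma_0}\bigl((p\wedge q)\vee\langle\downarrow^+\rangle(p\wedge q)\bigr)$ to force counterexamples to be essentially single-labeled, then relabel nodes carrying no $\Sigma_0$-label with a fresh dummy symbol, and finally invoke Proposition~\ref{prop:mik} to eliminate the introduced disjunctions when $\vee\notin S$. The only cosmetic difference is that the paper argues both directions by contraposition while you argue directly; one minor sloppiness is that your $T^s$ is only well-defined on the subtree below $n$ (nodes elsewhere may still carry several $\Sigma_0$-labels), but since all formulas are forward this is harmless.
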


\begin{proof}

Let $p_1, \ldots, p_n$ be the labels from $\Sigma$ occurring in $\phi$ or $\psi$.
We take $\phi''$ as $\phi$, and $\psi'':=\psi \vee\bigvee_{1\leq i<j\leq n} (p_i\wedge p_j)  \vee\bigvee_{1\leq i<j\leq n} \until{\downarrow}{(p_i\wedge p_j)}$. The disjuncts added to $\psi$ ensure that every node of a counterexample is labeled with at most one label. 
Now we show that $\phi\subseteq\psi$
iff
$\phi''\mlcont\psi''$

($\Rightarrow$)
We show the contraposition. Let $T$ be a multi-labeled tree such that $T\models
 \phi''$ and $T\not\models \psi''$. Then let $T'$ be the tree obtained from
 $T$ by restricting the labels $\rho'(v):=\rho(v)\cap \{p_1,\ldots ,
 p_n\}$ for every node $v$. Since $T$ does not satisfy $\psi''$, the
 label of every node in $T'$ contains at most one symbol. Because we did not change the valuation of the labels $p_1, \ldots, p_n$,  for each formula $\theta$ constructed from  these tags, it holds that $T,v\models \theta$ iff $T',v\models \theta$. From this and the fact that the label of each node of $T'$ is of size at most one, it follows that $T'\models \phi''$ and $T'\not \models \psi''$.
  In order to make a single-labeled tree out of $T'$\kern -2pt, we add a dummy symbol $q$ in the label of a node if its label in $T'$ is empty. For the resulting tree $T''$ it holds $T''\models \phi$ and $T''\not \models \psi$.

($\Leftarrow$)
Let $T$ be a single-labeled tree such that $T\models \phi$ and $T\not\models \psi$. Then this tree can be considered as a multi-labeled tree. Moreover, $T\models \phi''$ and $T\not\models \psi''$ since $T$ does not satisfy any of the disjuncts in $\psi''$.

 Our transformation does not introduce negations, but only disjunctions in $\psi''$. 
 Thus the proposition is proven for the cases when $S$ contains disjunction, where we take $\phi':=\phi''$ and $\psi':=\psi''$. For the remaining cases ($S\subseteq \{\neg\}$), 
 although $\psi''$ contains disjunctions,
 by Proposition~\ref{prop:mik}, there exist two $\PTIME$ computable $\CTP^S$ ($\TP^S$) patterns
 $\phi'$ and $\psi'$ such that $\phi''\mlcont\psi''$
iff
$\phi'\mlcont\psi'$. This concludes the proof.
\end{proof}
\subsubsection*{Negated labels}
\label{sec:negated-labels}

We now show that $\CTP$ is expressive enough to encode label negation, as far as the containment problem is concerned.

The trick of the encoding lies in the fact that two nodes connected by a descendent edge are in either child or descendant of a child relation. Additionally, if we require that neither of these relations occur at the same time, we can faithfully encode label negation.
\begin{prop}
 \label{prop:neg}
 Let $\phi$ and $\psi$ be $\CTP^\neg$ patterns. There are $\PTIME$ computable $\CTP$ patterns $\phi'$ and $\psi'$ such that 
\[ \phi\subseteq\psi\ \mbox{iff}\ \phi'\subseteq \psi'. \]
This also holds for containment over multi-labeled trees. 
\end{prop}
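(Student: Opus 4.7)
The plan is to trade negated labels for fresh positive surrogates and then repair the resulting semantic mismatch by adjusting the consequent of the containment, appealing to Proposition~\ref{prop:mik} to collapse any disjunctions that arise and to keep the whole construction polynomial time.

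Let $P$ be the set of labels occurring in $\phi$ or $\psi$; introduce a fresh label $\bar p$ for every $p\in P$. Let $\phi^+,\psi^+\in\CTP$ be obtained from $\phi,\psi$ by substituting $\bar p$ for $\neg p$. Call a tree \emph{clean} when for each $p\in P$ and each node $v$, exactly one of $p,\bar p$ belongs to $\rho(v)$. On clean trees the substitution is semantics preserving, so $T\models\phi\iff T\models\phi^+$ and likewise for $\psi$. The task reduces to simulating the restriction to clean trees.

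Take $\phi':=\phi^+$ and let $\psi'$ be (the $\CTP$ collapse, via Proposition~\ref{prop:mik}, of) the disjunction
\[
\psi^+\ \vee\ \bigvee_{p\in P}\langle\down^*\rangle(p\wedge\bar p)\ \vee\ \chi_{\text{gap}},
\]
where the middle disjunct is an overlap witness (``some node carries both $p$ and $\bar p$'') and $\chi_{\text{gap}}$ is a \CTP{} formula expressing the dual gap witness (``some node used in a would-be simulation of $\psi^+$ carries neither $p$ nor $\bar p$''). Writing $\chi_{\text{gap}}$ without using label negation is where the dichotomy recalled just before the proposition enters: in a tree, a proper descendant of a node is either a child or a descendant of a child, and these alternatives are mutually exclusive, so an $\until$-formula of appropriate shape can force each relevant node to commit to exactly one of $p,\bar p$, with failure of the commitment amounting precisely to a gap.

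The equivalence $\phi\subseteq\psi\iff\phi'\subseteq\psi'$ then splits into the expected two directions. A countermodel to $\phi\subseteq\psi$ is first cleaned up by adding $\bar p$ at every node lacking $p$, yielding a clean countermodel to $\phi'\subseteq\psi'$ (the overlap and gap disjuncts both fail on clean trees). Conversely, a countermodel $T$ to $\phi'\subseteq\psi'$ is overlap- and gap-free on the relevant nodes, so stripping every $\bar p$ from $T$ produces a tree $T^-$ with $T^-\models\phi$ and $T^-\not\models\psi$. The only non-routine step, and the main obstacle, is the construction and correctness of $\chi_{\text{gap}}$: without it the right-to-left direction breaks, because a simulation of $\psi$ in $T^-$ could exploit a node that, back in $T$, carries neither $p$ nor $\bar p$, and such a simulation would not lift to a simulation of $\psi^+$ in $T$.
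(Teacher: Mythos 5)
Your overall strategy --- replace $\neg p$ by a fresh positive surrogate $\bar p$ and push the consistency conditions into the consequent, collapsing the disjunction via Proposition~\ref{prop:mik} --- breaks down exactly at the point you flag as ``the main obstacle'', and that obstacle is not merely non-routine but insurmountable within your encoding. The overlap disjunct $\bigvee_{p}\descstar(p\wedge\bar p)$ is fine, but $\chi_{\text{gap}}$ cannot exist: every $\CTP$ formula is positive, hence preserved under adding labels to nodes of a multi-labeled tree, so a formula true precisely on trees containing a gap node would become false after adding $\bar p$ to that node --- a contradiction. (Over single-labeled trees the same conclusion follows because a positive formula can only assert membership in the finitely many labels it mentions, while a gap node may carry any of the infinitely many other labels.) Appealing to the child-versus-descendant dichotomy does not help as long as $p$ and $\bar p$ are labels sitting \emph{at} the node in question: the dichotomy is a fact about tree structure, and your encoding makes no structural distinction between the two polarities. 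Without $\chi_{\text{gap}}$, both directions of your equivalence fail, since a simulation of $\psi$ (resp.\ $\psi^+$) may route a $\neg p$-node (resp.\ $\bar p$-node) through a node carrying neither label.

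The paper's proof uses the dichotomy in an essentially different way that removes the gap problem rather than detecting it. Truth and falsity of $p_i$ at a node are encoded \emph{positionally}: each relevant node gets a designated $s$-child, $(p_i)^\bullet$ asserts a $p_i$-node at depth exactly one below that $s$-child, and $(\neg p_i)^\bullet$ asserts a $p_i$-node at depth at least two, via $\ch\downarrow(s\land\ch\downarrow\until{\downarrow}{p_i})$. Crucially, the totality formula $\zeta=\ch\downarrow(s\land\bigwedge_i\until{\downarrow}{p_i})$ is a \emph{positive} formula inserted as a conjunct into every subformula of the translated \emph{antecedent} $\phi^\bullet$; since $\zeta$ guarantees a $p_i$-descendant of the $s$-child at \emph{some} depth, every relevant node automatically satisfies $(p_i)^\bullet$ or $(\neg p_i)^\bullet$, so there are no gaps to detect. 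Only the mutual-exclusion conditions (both $(p_i)^\bullet$ and $(\neg p_i)^\bullet$ holding, or $\zeta$ recurring below an $s$-node) go into the consequent as extra disjuncts, and these are positive, exactly like your overlap disjunct. The repair you need is therefore to move the totality obligation from the consequent, where it is inexpressible, into the antecedent, where it becomes a positive constraint on models --- and that forces the structural gadget in place of the fresh-label substitution.
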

\begin{proof}
Given a containment problem $\phi\subseteq\psi$ in $\CXP^\neg$,
we  construct an equivalent containment problem
$\phi^\bullet\subseteq\psi^\circ$ with $\phi^\bullet$ in $\CTP$ and
$\psi^\circ$ a union of $\CTP$ patterns.
 Applying Proposition~\ref{prop:mik} then  yields the desired
result.

Let $p_1$, $\neg p_1,\dots,p_n,\neg p_n$ be the labels  appearing in
$\phi$ and $\psi$  and their
negations.
Let $s$ be a new label  and
$\zeta$ be the formula $\ch\downarrow
 (s\land \bigwedge_i \until{\downarrow}{p_i})$.
We define the translation $(\cdot)^\bullet$ inductively:
\begin{align*}
 \top^\bullet
   &= \zeta\\
 (p_i)^\bullet
   &= \zeta \land \ch\downarrow(s\land\ch\downarrow p_i)\\
 (\neg p_i)^\bullet
   &= \zeta\land\ch\downarrow(s\land\ch\downarrow\until{\downarrow}{p_i})\\
 (\theta\land\sigma)^\bullet
   &= \theta^\bullet\land\sigma^\bullet \\
 (\ch\downarrow\phi)^\bullet
 &= \zeta\land\ch\downarrow\phi^\bullet \\
 (\until\phi\psi)^\bullet     &= \zeta\land\until{\phi^\bullet}{\psi^\bullet}.
\end{align*}
The translation is defined in such a way  that for any $\phi$, $\phi^\bullet$ implies $\zeta$. 
Obviously $(\cdot)^\bullet$ is a $\PTIME$ translation.
Let $\descstar \phi$ denote $\phi\lor \until\downarrow\phi$. Let $AX$ be the disjunction of the following formulas:

\begin{eqnarray}
\descstar(s\land \descstar \zeta) \label{ax:1} \\
\bigvee_i \descstar( p_i^\bullet\land(\neg p_i)^\bullet) \label{ax:2}\\
\bigvee_{i\neq j}\descstar(p_i^\bullet \wedge p_j^\bullet) \label{ax:3}.
\end{eqnarray}

In our intended models, none of these disjuncts are true at the root. The  disjunct \eqref{ax:1} is technical, \eqref{ax:2} states that no node makes the encodings of both $p$ and $\neg p$ true, and \eqref{ax:3} ensures that no node makes the encoding of two different labels true. 
We define $\psi^\circ = \psi^\bullet\lor AX$. 
In case of multi-labeled trees the translation is  the same, except that we do not include the disjunction~\eqref{ax:3} in the definition of $\psi^\circ$.
Note that although in $\psi^\circ$ we have disjunctions within the scope
of a modality, we can (in $\PTIME$) rewrite the formula into an equivalent union of disjunction free formulas.

We now show that $\phi\subseteq\psi$ \mbox{iff}
$\phi^\bullet\subseteq\psi^\circ$.
\\
 ($\Leftarrow$)
 We show the contrapositive.
 Given a single-labeled tree~$T=(N, E, r, l)$ such that $T,r\models \phi$ and $T, r\not\models \psi$,
 we build a single-labeled tree~$T'=(N', E', r', l')$ as follows.
 The set of nodes $N'$ is the maximal $E'$-connected subset of $N\times\{0,1,2,3\}\times \{p_1, \ldots, p_n, \sharp\}$ which contains the root $r'=(r,0,\sharp)$. The relation
  $E'$ is defined as follows: 
 $(n',0,\sharp)$ is the parent of $(n,0,\sharp)$ when $n'$ is the parent of $n$ in $T$;
 $(n,0,\sharp)$ is the parent of $(n,1,\sharp)$; $(n,1, \sharp)$ is the parent of $(n,2,p_i), i=1, \ldots, n$, and $(n, 2, p_i)$ is the parent of $(n,3,p_i), i=1, \ldots, n$.  
 The labeling $l'$ is defined as follows: $s$ only labels  nodes of the form $(n,1,\sharp)$. 
  Then,  $p_i$ labels $(n,2,p_i)$ if $T,n\models p_i$ and $p_i$
  labels $(n,3,p_i)$ if
 $T,n\not\models p_i$.  No other nodes are labeled by $p_i$. 
 In all other cases, we label a node by a fresh label $z$. 
 
  It is clear from the definition   that $T'$ is a single-labeled tree.
Note that $T'\not\models AX$ and that $\zeta$ is true   at all nodes of type $(n,0,\sharp)$, and only at these nodes.
Let $\theta$ be a formula in variables  $\{p_1, \ldots, p_n\}$, and $n\in N$. By induction on $\theta$, we show
 $T,n\models\theta$ iff $T',(n,0,\sharp)\models \theta^\bullet$.
 
 \begin{itemize}
 \item $\theta=\top$. This holds because by construction  $\zeta$ is true at all nodes of type $(n, 0, \sharp)$. 
 
 \item $\theta=p$. We have $T, n\models p$ $\Leftrightarrow$ $l(n)=p$ $\Leftrightarrow$ $ l'(n, 2, p)=p$ $\Leftrightarrow$(since $(n, 2, p)$ is the only child of $(n, 1, \sharp)$ labeled with $p$ and $(n, 1, \sharp)$ is labeled by $s$) \\
  $T', (n, 1, \sharp)\models s\land \child{p}$ $\Leftrightarrow$ (since $(n, 1, \sharp)$ is the only child of $(n,0, \sharp)$ and $\zeta$ is always true at $(n, 0, \sharp)$) $T', (n, 0, \sharp)\models \zeta\land \child{(s\land \child{p})}$ $\Leftrightarrow$ $T', (n,0, \sharp)\models p^\bullet$. 
 
 \item $\theta=\neg p$. We have $T, n \models \neg p$ $\Leftrightarrow$ $l(n)\neq p$ $\Leftrightarrow$ $l'(n, 3, p)=p$ $\Leftrightarrow$ (since $(n, 3, p)$ is the only descendent of $(n, 1, \sharp)$ labeled with $p$ and $(n, 1, \sharp)$ is labeled by $s$) \\
  $T', (n, 1, \sharp)\models s\land \child{\dEsc{p}}$ $\Leftrightarrow$ (since $(n, 1, \sharp)$ is the only child of $(n,0, \sharp)$ and $\zeta$ is always true at $(n, 0, \sharp)$) $T', (n,0,\sharp)\models \zeta\land \child{(s\land \child{\dEsc{p}})}$ $\Leftrightarrow$ $T', (n, 0 , \sharp)\models (\neg p)^\bullet$.
 
 \item $\theta=\phi_1\land \phi_2$. We have $T, n \models \phi_1\land \phi_2$ $\Leftrightarrow$ $T, n\models \phi_1$ and $T, n \models \phi_2$ $\Leftrightarrow$ (by the induction hypothesis) $T', (n,0, \sharp)\models \phi^\bullet_1$ and $T', (n, 0, \sharp)\models \phi^\bullet_2$ $\Leftrightarrow$ $T', (n, 0, \sharp)\models \phi^\bullet_1 \land \phi^\bullet_2$ $\Leftrightarrow$ $T', (n,0, \sharp)\models (\phi_1\land\phi_2)^\bullet$.
 
  \item $\theta=\child{\phi}$. We only show the right to left direction. $T', (n, 0, \sharp)\models (\child{\phi})^\bullet$ $\Leftrightarrow$ $T', (n, 0, \sharp)\models  \zeta\land \child{\phi^\bullet}$ $\Leftrightarrow$  $\exists  m \in N'$ such that $(n, 0, \sharp)E' m$ and $T',m \models \phi^\bullet$. But then $T',m \models \zeta$ because $\phi^\bullet$ implies $\zeta$. Thus $m$ must be of the form $(n',0,\sharp)$ as only these make $\zeta$ true. 
  But then $n'\in N$ and  $nEn'$ and we may apply the  inductive  hypothesis to get $T,n'\models\phi$, and thus $T,n\models \child{\phi}$.

  \item $\theta=\until \phi\psi$.  We only show the right to left direction.  $T',(n, 0, \sharp)\models (\until\phi\psi)^\bullet$ $ \Leftrightarrow$ $T', (n, 0, \sharp)\models \zeta \land \until{\phi^\bullet}{ \psi^\bullet}$ 
  $ \Leftrightarrow$
  $\exists m \in N'$ such that $(n,0, \sharp)E'^+m$ and $T', m\models \psi^\bullet$ and $\forall m'. (n,0, \sharp)E'^+ m' E'^+ m$ it holds $T', m'\models \phi^\bullet$. Now, because $\phi^\bullet$ implies $\zeta$ for all $\phi$, $m$ and all nodes $m'$ are of the form $(n',0,\sharp)$ and thus all in the original model $N$. Moreover they stand in the same way in the $E$ relation. Thus we can apply the  inductive hypothesis and obtain  $T,n\models\until \phi\psi$.
 \end{itemize}
 As a special case, $T',(r,0,\sharp)$ satisfies $\phi^\bullet$ but not
 $\psi^\bullet$.
 Recall that  $T',(r,0,\sharp)$ does not satisfy the  other disjuncts $AX$ of
 $\psi^\circ$ either and thus, $T', (r, 0, \sharp)\models \phi^\bullet$ and $T' ,(r, 0, \sharp)\not\models \psi^\circ$, as desired.

 ($\Rightarrow$)
 Again we show the contrapositive.
 Suppose there is a model~$T=(N, E, r, l)$ satisfying $\phi^\bullet$ but not
 $\psi^\circ$ at the root $r$. 
 Then in particular $T,r\not\models AX$.
 Without loss of generality, we can assume that the simulation from
 (the conditional tree pattern corresponding to) $\phi^\bullet$ into $T$
 is surjective. 
 (Otherwise the image of $\phi^\bullet$ is a subtree of $T$ and thus by monotonicity $\psi^\circ$ cannot be satisfied at the root of this subtree.)
 In this model, as a consequence of the fact that $\phi^\bullet$ implies $\zeta$, every branch has an initial segment satisfying~$\zeta$,
 immediately
 followed by a node labeled by $s$, which is followed by a segment
 where $\zeta$ is never satisfied because of the first disjunct \eqref{ax:1} of $AX$. 
 
We define a tree~$T'=(N', E', r', l')$ whose set of nodes $N'$ consists of the nodes
 of $T$ where $\zeta$ is satisfied, i.e.  $N' = \{n\in N\mid T,n\models \zeta\}$;
 $E'$ is simply the restriction of $E$ to $N'$ and  $r'=r$.
 We define $l'(n)= p_i$ iff $T,n\models
 p_i^\bullet$, and if $T, n\not\models p_i^\bullet$ for any $i$, then we label $n$ with a fresh variable $z$.
 This definition of $l'$ is well-defined because the falsity of the disjunction~\eqref{ax:3}  in $AX$ ensures that each node makes at most one $p_i^\bullet$ true in $T$.
 
Let $\theta$ be a formula in variables  $\{p_1, \ldots, p_n\}$, and $n\in N$. By induction on $\theta$, we show
   that 
 \begin{equation}
\label{eq:ctp}
 T,n\models\theta^\bullet\mbox{ iff }   T',n\models\theta.
 \end{equation}

 \begin{itemize}
 \item $\theta=\top$. Then $T, n\models \top^\bullet$ $\Leftrightarrow$ (by definition of $(\cdot)^\bullet$) $T, n \models \zeta$ $\Leftrightarrow$ (by definition of $N'$) $\Leftrightarrow$ $T', n \models \top$. 
 
 \item $\theta=p$. Then $T, n \models p^\bullet$ $\Leftrightarrow$    (by definition of the labeling $l'$)   $l'(n)= p$ $\Leftrightarrow$ $T',n \models p$.

 \item $\theta=\neg p$.  If $T, n\models (\neg p)^\bullet$, then by the falsity of $AX$,  $T, n\not\models p^\bullet$, and thus $l'(n)\neq p$ and $T',n\not\models p$ and thus $T',n\models \neg p$.
 Conversely, $T',n\models \neg p$ $\Leftrightarrow$ $l'(n)\neq p$ $\Leftrightarrow$ $T, n\not\models p^\bullet$. But $T,n\models \zeta$ and thus either $p^\bullet$ or $(\neg p)^\bullet$ must hold at $T,n$. Thus 
 $T, n\models (\neg p)^\bullet$.
 
 \item $\theta=\phi_1\land \phi_2$. Then $T, n\models (\phi_1\land \phi_2)^\bullet$ $\Leftrightarrow$ $T, n \models \phi_1^\bullet$ and $T, n \models \phi_2^\bullet$ $\Leftrightarrow$ ($n\in N'$ because $\phi^\bullet$ implies $\zeta$ and thus by the inductive hypothesis) $T', n\models \phi_1$ and $T', n\models \phi_2$ $\Leftrightarrow$ $T', n\models \phi_1\land \phi_2$. 
 
 \item $\theta=\child \phi$. Then $T,n \models (\child \phi)^\bullet$ $\Leftrightarrow$ $T, n\models \zeta$ and $T, n \models \child \phi^\bullet$ $\Leftrightarrow$   there exists $n'\in N$ such that $nEn'$ and $T, n'\models \phi^\bullet$ $\Leftrightarrow$ (by  the fact $T, n'\models \zeta$ and, thus, $n'\in N'$ and the inductive hypothesis) there exists $n'\in N'$ such that $nE'n'$ and $T', n'\models \phi$ $\Leftrightarrow$ $T', n\models \child \phi$, as desired.

\item $\theta=\until\theta\tau$.
  Assume
 $T,n\models(\until\theta\tau)^\bullet$.
Then, by definition of $(\cdot)^\bullet$,  $T,n\models
 \zeta\land\until{\theta^\bullet}{\tau^\bullet}$.
 That means there exists $n'$ in $T$ with $nE^+n'$ such that $T,n'\models\tau^\bullet$ and for all $n''$ with $nE^+n''E^{+}n'$ it holds $T,n''\models \theta^\bullet$.
 By definition of $(\cdot)^\bullet$, the nodes $n,n'$ and all the nodes between $n$ and $n'$ 
 satisfy $\zeta$ and thus belong to $T'$.
 By inductive hypothesis, $T',n'\models \tau$ and $T',n''\models \theta$ for all $nE^{+}n''E^{+}n'$, which means
 $T',n\models \until\theta\tau$ holds.  Conversely, assume
 $T',n\models \until\theta\tau$. By definition of $T'$, we have that
 $T,n\models \zeta$, which is the first conjunct of
 $({\until\theta\tau})^\bullet$. The second conjunct follows by  the inductive
 hypothesis.  
  \end{itemize}
As  $T$ is a counterexample for
 $\phi^\bullet\subseteq\psi^\circ$,  \eqref{eq:ctp} implies that 
  $T'$ is a counterexample of $\phi\subseteq \psi$, as desired. 
  
The same argument applies for multi-labeled trees. The only change is to remove the last  disjunction~\eqref{ax:3} from $AX$.
\end{proof}

\subsection{Lower bounds}

In this section we show that the containment for $\TP^{\neg}$ is $\PSPACE$ hard. This lower bound will carry over to the containment problem for $\CTP$.

\begin{thm} \hfill
\begin{enumerate}[label=(\roman*)]
\item The containment problem for $\CTP$ is $\PSPACE$-hard.
\item The containment problem for $\TP^{\neg}$ is $\PSPACE$-hard,
\item Both results also hold for multi-labeled trees.
\end{enumerate}
\end{thm}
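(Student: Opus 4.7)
The plan is to prove item~(ii) first: $\PSPACE$-hardness for $\TP^{\neg}$ via a reduction from the $\PSPACE$-complete problem \corridor{}. Items (i) and (iii) then come almost for free. Proposition~\ref{prop:neg} shows that any $\CTP^{\neg}$ containment can be rewritten in polynomial time as an equivalent $\CTP$ containment; since $\TP^{\neg} \subseteq \CTP^{\neg}$, this immediately transports the $\TP^{\neg}$ lower bound to $\CTP$ and yields~(i). Similarly, Proposition~\ref{prop:standard} provides a polynomial-time reduction from single-labeled to multi-labeled containment, which transports both lower bounds to the multi-labeled setting and yields~(iii).

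For the reduction to $\TP^{\neg}$ containment, given a \corridor{} instance with tile set $T$, horizontal and vertical compatibility relations $H,V \subseteq T\times T$, designated start and end tiles $t_0,t_f$, and corridor width $n$ (in unary), I would encode candidate tilings by trees with a distinguished skeleton: a descendant chain of ``row'' nodes, each spawning a child chain of length $n$ marked with position labels $c_1,\ldots,c_n$, each position node carrying exactly one tile label from $T$. The antecedent $\phi\in\TP^{\neg}$ pins down this skeleton, so every model of $\phi$ encodes a sequence of rows over $T$. The consequent $\psi$ is taken to be a \emph{union} of constantly many small $\TP^{\neg}$ patterns, each expressing one way the encoded tiling can fail: an incorrect start or end tile, a horizontally or vertically adjacent pair of tiles violating $H$ or $V$, or a position carrying two distinct tile labels. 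The essential use of negated labels appears in the mismatch patterns, of the form ``tile $t$ appears at one position and a tile $\neg t'$ appears at the next'', which lets a single incompatibility $(t,t')\notin H$ be witnessed by a constant-size pattern rather than by an enumeration of wrong neighbours. A disjunction-free consequent is then obtained via Proposition~\ref{prop:mik}, and the whole construction is manifestly polynomial.

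The main obstacle is making the encoding tight. Because $\TP^{\neg}$ formulas are monotone under leaf extensions, a model of $\phi$ may contain arbitrary additional nodes beyond the intended skeleton, and one must ensure that such extras cannot spuriously satisfy a violation pattern in $\psi$ except when they genuinely encode a constraint violation. This is handled by anchoring every pattern in $\psi$ to the distinguished row markers and position labels $c_1,\ldots,c_n$, so that the ``locality'' of constraint checking survives regardless of what is grafted onto the tree. Once this is in place, $\phi\subseteq\psi$ holds iff every candidate tree is faulty iff no valid tiling exists, completing the reduction and with it the $\PSPACE$-hardness of containment for $\TP^{\neg}$, $\CTP$, and their multi-labeled variants.
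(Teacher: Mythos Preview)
Your high-level strategy is exactly the paper's: reduce \corridor{} to $\TP^{\neg}$ containment for~(ii), build the consequent as a union of error patterns and eliminate the union via Proposition~\ref{prop:mik}, then derive~(i) from~(ii) via Proposition~\ref{prop:neg} and~(iii) via Proposition~\ref{prop:standard}.

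Where you diverge is in the encoding and, more importantly, in your account of why negated labels are essential---and the latter is a genuine gap. A forbidden pair $(t,t')\notin H$ is witnessed by the \emph{positive} pattern ``$t$ at one cell and $t'$ at the next''; no negation is needed, and enumerating all such pairs is quadratic in the tile set and hence polynomial. Your proposed pattern ``$t$ here and $\neg t'$ next'' detects the wrong event: it fires whenever the neighbour of a $t$-cell is not~$t'$, regardless of whether that neighbour is actually incompatible with~$t$. So the mismatch clauses neither need negation nor are correctly specified in your sketch.

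The paper instead encodes the tiling linearly, as a single path spelling $\bar b\,\sharp\,u_2\,\sharp\cdots\sharp\,\bar t\,\$$, and uses negated labels only in two \emph{structural} disjuncts of~$\psi$: one asserting that some node before~$\$$ carries neither a tile label nor~$\sharp$ (the conjunction $\neg d_1\wedge\cdots\wedge\neg d_m\wedge\neg\sharp$), and one asserting that some row is too long (a run of $n{+}1$ consecutive $\neg\sharp$ nodes). These are what force any countermodel to genuinely encode a tiling: without them, the descendant segment between $\bar b\sharp$ and $\sharp\bar t\$$ could carry arbitrary labels or be ill-grouped into rows, and you could not read off a valid tiling from a model of $\phi\wedge\neg\psi$. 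Your row/position-label encoding faces the same obstacle (and an additional one: expressing ``two \emph{consecutive} row nodes'' for the vertical check, which in $\TP^{\neg}$ again needs negated labels on the intermediate nodes, since there is no conditional axis). So the reduction can be made to work along your lines, but the indispensable role of~$\neg$ is in shaping the skeleton, not in the compatibility checks; as written, your $\psi$ would let through non-tilings.
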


\begin{proof}
(i) follows from (ii) by Proposition~\ref{prop:neg}. 
 (iii) follows from (i) and (ii) by Proposition~\ref{prop:standard}.
For proving (ii), we reduce the corridor tiling problem, which
is known to be hard for $\PSPACE$~\cite{chle:domi86,emde:conv96}, to the containment
problem for $\TP^{\neg}$. We use the construction from the  $\PSPACE$-hardness proof for the containment problem of $\TP$ with disjunction over a finite alphabet in \cite{neve:comp06}.

The corridor tiling problem is formalized as follows. Let ${\sf
Til}=(D,H,V, \bar{b}, \bar{t}, n)$ be a tiling system, where $D=\{d_1,
\ldots,$ $d_m\}$ is a finite set of tiles, $H, V\subseteq D^2$ are
horizontal and vertical constraints, $n$ is a natural number given in
unary notation, $\bar{b}$ and $\bar{t}$ are tuples over $D$ of length
$n$. Intuitively, given a corridor of width $n$, the goal is to
construct a tiling of the corridor using the tiles from $D$ so that the
horizontal and vertical constraints are satisfied and  the bottom
and  top rows are  tiled by $\bar{b}$ and $\bar{t}$, respectively. We say that a tiling satisfies the horizontal constraints $H$(respectively the vertical constraints $V$) if for every $1\leq i\leq n, j\in \nat$ it holds that if $d_1$ and $d_2$ are the tiles on the positions $(i , j)$ and $(i+1, j)$ of the corridor (respectively $(i,j)$ and $(i, j+1)$), then $(d_1, d_2)\in H$ ($(d_1, d_2)\in V$ ). 

Now we construct in $\PTIME$ in the length of {\sf Til}, two $\TP^{\neg}$ expressions $\varphi$ and $\psi$ such
that the following holds: 
\begin{equation}
\label{eq:tile}
\varphi \not\subseteq \psi\ \mbox{iff  there exists
a tiling for}\ {\sf Til}.
\end{equation}
By  Proposition~\ref{prop:mik}, we may without loss of generality construct $\psi$ as a disjunction of $\TP^{\neg}$ expressions. 
To this purpose, we use
the string representation of a tiling. Each row of the considered tiling is represented by the tiles it
consists of. If the tiling of a corridor of width $n$  has $k$ rows, it
is represented by its rows separated by the special symbol $\sharp$. The
top row is followed by the symbol~$\$$.  Thus,  a tiling is a
word of the form $u_1 \sharp u_2 \sharp \cdots \sharp u_k \$$,
where each $u_i$ is the word of length $n$ corresponding to the $i$-th
row in the tiling. In particular $u_1=\bar{b}$ and $u_k=\bar{t}$.
For the sake of readability, for expression $r$, $r^i$ denotes the path formula $?r;\downarrow; ?r; \ldots; \downarrow; ?r$ with $i$ occurrences of $r$.

Let $\varphi$ be
$$\tup{?b_1;\downarrow; ?b_2 ; \ldots ; \downarrow; ?b_n; \downarrow ; ?\sharp;\  \downarrow^+\ ; ?t_1,\downarrow; \ldots \downarrow ; ?t_n;\downarrow}\$.$$
Intuitively, this
 expression enforces a tiling to start with a path starting with $\bar{b}$ and finishing
 with $\bar{t}$ and the final symbol~$\$$.
 Now the formula $\psi$ defines all incorrect tilings and additional
constraints. It is the disjunction of the following $\TP^{\neg}$ formulas.

\begin{enumerate}
\item[(0)] $\tup{\downarrow^+\ ; ?(\neg d_1\wedge \ldots \wedge \neg d_m\wedge \neg \sharp);\ \downarrow^+}\$$. There is a position that is labeled neither by a tile nor a delimiter.

\item[(1)] Incorrect length of a row.
\begin{enumerate}[label=(1\alph*)]
\item[(1a)] $\bigvee^{n-1}_{i=0}\tup{\downarrow^+\ ;?\sharp; \top^i;\ \downarrow}\sharp$, a row is too short;

\item[(1b)] $\tup{\downarrow^+\ ;(\neg \sharp)^{n+1}} \top$, a row is too long;
\end{enumerate}

\item[(2)] $\bigvee_{(d_1,d_2)\not\in H}\tup{\downarrow^+; ?d_1; \downarrow ; ?d_2} \top$, some horizontal
	   constraints are violated;

\item[(3)] $\bigvee_{(d_1,d_2)\not \in V}\tup{\downarrow^+; ?d_1; \downarrow; \top^n; \downarrow; ?d_2}\top$, some
	   vertical constraints are violated.
\end{enumerate}

\noindent
Note that negated labels are used in (0) and (1b). Also note that the size of $\varphi$ and $\psi$ is bounded by a polynomial in the size of  {\sf Til}.

We now show \eqref{eq:tile}.
 
$(\Leftarrow)$. Assume that there exists a tiling of the corridor. Let
 $s$ be the string representation of it. Then,
 $s=u_1\sharp u_2 \sharp  \dots \sharp u_k \$$, where
 $|u_i|=n$, $u_i \in D^n$, $u_1=\bar b$ and $u_k= \bar t$. Moreover, on
 the one hand  if $x\cdot y$, is an infix of some $u_i$, then $(x,y) \in
 H$, and on the other hand for every infix $x\cdot u' \cdot y$ of length
 $n+1$ of $u_i \sharp \cdot u_{i+1}$, it holds that $(x,y) \in V$. Let $T_s$ be the corresponding tree, i.e. a single path of $|s|$ nodes $\{v_1, \ldots , v_{|s|}\}$ where the labeling is set in accordance with $s$, i.e. $l(v_i)=s_i$.
 Clearly, $T_s$ is a model of $\varphi$ and not of $\psi$.

$(\Rightarrow)$. Let $T$ be a tree such that $T, r\models \varphi$ and
 $T, r\not\models \psi$. Since $T, r\models \varphi$, there must exist a
 path $r=v_1,\ldots, v_m$ in $T$ which starts with $\bar{b}$ and finishes with
 $\bar{t}\$$. Moreover, either $\sharp$ or a symbol from $D$ is the label of every node $v_i, 1\leq i< m$, according to (0). 
 
We define a tiling function $g: \{0, \ldots, n-1\}\times \nat\to D$ assigning a tile to every position in the corridor as follows: $g(i,j)=l(v_{(n+1)\times
 j+i+1}), 1\leq i\leq n$, where $l$ is the labeling function of $T$. Indeed, this function is well defined, as (1) ensures the correct counting. 
By formulas (2) and (3) the tiling defined by $g$ satisfies the horizontal and vertical constraints. 

\end{proof}

The   difference between  $\CTP$ patterns and  $\TP$ patterns is that $\CTP$ descendent edges can be labeled by   patterns. One might ask whether a bound on the degree of such a labeling nesting can lead to a lower  complexity of the containment problem. Define the until nesting depth $un: \CTP \to  \mathbb N$ as follows.
\begin{itemize}
\item $un(p)=un(\top)=0$,
\item $un(\tup{\downarrow}\phi)=un(\until{\downarrow}{\phi})=1$,
\item $un(\phi_1\land \phi_2)=\max\{un(\phi_1),un(\phi_2)\}$, 
\item  $un(\until{\phi}{\psi})=un(\phi)+1$.
\end{itemize}

Unfortunately, a close examination of the encoding in the lower bound proof and the encodings in Propositions~\ref{prop:mik} and~\ref{prop:neg} gives a negative answer to the question. Thus we obtain

\begin{thm}
The $\CTP$ containment problem for formulas of until nesting depth one is $\PSPACE$-hard. 
\end{thm}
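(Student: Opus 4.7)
The plan is to verify that the three-step reduction behind the previous hardness theorem --- corridor tiling $\to$ $\TP^{\neg}$ containment $\to$ $\CTP$ containment (via Proposition~\ref{prop:neg}, with Proposition~\ref{prop:mik} to collapse disjunctions) --- produces only $\CTP$ instances in which every conditional until has a filter with no nested conditional until, i.e.\ of until nesting depth at most one.

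First I would inspect the $\TP^{\neg}$ formulas $\varphi$ and $\psi$ built from the tiling system. A direct reading shows that both use only atomic labels, their negations, conjunction, the child step $\tup{\downarrow}$, and the descendant step $\tup{\downarrow^+}$; in particular no non-trivial conditional until $\until{\phi}{\psi}$ appears, since every descendant axis used is the trivial $\until{\top}{\cdot}$. So the $\TP^{\neg}$ inputs trivially satisfy $un \le 1$.

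Second, I would apply the encoding $(\cdot)^\bullet$ of Proposition~\ref{prop:neg} and examine each clause. The auxiliary formula $\zeta = \ch\downarrow(s \land \bigwedge_i \until{\downarrow}{p_i})$ contains only descendants applied to atomic labels, hence $un(\zeta)\le 1$ and, crucially, no conditional until occurs inside a filter. The clauses for $\top^\bullet$, $p_i^\bullet$ and $(\neg p_i)^\bullet$ introduce only conjunctions and child steps over $\zeta$; the clause for $(\tup{\downarrow}\phi)^\bullet$ again only adds a child step; and the clause $(\until{\phi}{\psi})^\bullet = \zeta \land \until{\phi^\bullet}{\psi^\bullet}$ is the only source of genuine conditional untils in the target. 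Since the source formulas contain no non-trivial $\until$, this last clause is only triggered when $\tup{\downarrow^+}\psi \equiv \until{\top}{\psi}$ is translated, yielding $\until{\zeta}{\psi^\bullet}$; as $\zeta$ has no nested conditional until inside its filters, the resulting formula still satisfies $un \le 1$. A straightforward induction on the original $\TP^{\neg}$ formula then shows that $\phi^\bullet$ and $\psi^\bullet$ lie within until nesting depth one.

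Third, I would check the remaining glue lemmas. Proposition~\ref{prop:mik} only adds fresh $a$- and $b$-labeled ancestors linked by child steps and conjunctions of the original formulas, so it neither creates new conditional untils nor inserts anything of positive depth into an existing filter; likewise the disjunction elimination preceding $\psi^\circ$ and the Boolean-reduction of Proposition~\ref{prop:boolean} preserve the bound. The main obstacle throughout is exactly this syntactic bookkeeping: one must rule out that any translation clause places a formula of positive until nesting depth into the filter position of some $\until{\cdot}{\cdot}$. Since the starting $\TP^{\neg}$ instance is free of non-trivial untils, and every clause of Propositions~\ref{prop:neg} and~\ref{prop:mik} introduces either no new conditional until or one whose filter is $\zeta$ (of depth $\le 1$ with atomic-filtered subformulas), the argument reduces to a finite case check. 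Combining these observations with the previous theorem yields the $\PSPACE$-hardness for $\CTP$ at until nesting depth one.
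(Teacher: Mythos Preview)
Your approach is essentially the paper's own: the paper's proof is a single sentence remarking that a close examination of the encodings in the tiling reduction and in Propositions~\ref{prop:mik} and~\ref{prop:neg} shows they stay within until nesting depth one, and you carry out exactly that examination in detail.

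One small inaccuracy worth fixing: your description of Proposition~\ref{prop:mik} as adding ``only \dots\ child steps and conjunctions'' and creating ``no new conditional untils'' is not quite right for the $\CTP$ version of the construction. The pattern $\psi$ built there (Figure~\ref{fig12}(b)) does contain one genuine conditional descendant edge with filter $a$, i.e.\ an occurrence of $\until{a}{\cdot}$. This does not harm your conclusion, since the filter $a$ is atomic and hence the resulting until still has nesting depth one; but the bookkeeping should record that this new until is created and then observe that its filter has depth zero, rather than claim no until is introduced. With that correction, your case analysis goes through and matches the paper.
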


\subsection{Upper bounds}
\label{sec:ctp}

In this section, we show a matching $\PSPACE$ upper bound for $\CTP$ containment.

The complexity of $\CTP^{\neg,\vee}$ containment follows from a translation into existential
$\CTL$ ($\ECTL$),
whose  containment problem is known to be $\PSPACE$-complete
\cite{kupf:auto00}.
The only small technical issue is that $\ECTL$ formulas are
interpreted over \emph{infinite} finitely branching trees. We solve that by relativizing formulas  with a new
propositional variable $s$, whose interpretation will provide the desired finite tree.

\begin{thm}
  \label{thm:ctp-up}
The containment problem for $\CXP^{\neg,\vee}$ is decidable in $\PSPACE$.
This also holds for multi-labeled trees. 
\end{thm}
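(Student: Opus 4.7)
The plan is to reduce $\CXP^{\neg,\vee}$ containment to containment for existential CTL ($\ECTL$), which is $\PSPACE$-complete by Kupferman and Vardi~\cite{kupf:auto00}. The only technical hurdle is the semantic mismatch: $\ECTL$ interprets formulas over infinite finitely branching trees, while $\CTP$ lives on finite trees. We resolve this by earmarking the \emph{simulated} finite tree inside an infinite $\ECTL$ model using a fresh propositional variable $s$ for relativization.

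First, I would define a polynomial-size translation $(\cdot)^{*}$ from $\CXP^{\neg,\vee}$ node formulas into $\ECTL$ formulas over the extended alphabet $\Sigma \cup \{s\}$: labels and negated labels translate to themselves, the Boolean connectives $\wedge$ and $\vee$ are preserved, and each modal clause is relativized to $s$-successors, e.g.\ $(\langle\downarrow\rangle\phi)^{*} = EX(s \wedge \phi^{*})$, $(\langle\downarrow^{+}\rangle\phi)^{*} = EX\,E[s\,U\,(s \wedge \phi^{*})]$, and $(\until{\phi}{\psi})^{*} = EX\,E[(s \wedge \phi^{*})\,U\,(s \wedge \psi^{*})]$, the last expressing the strict existential until. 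Since negation ends up only at atoms, the output is a genuine $\ECTL$ formula, and the translation is clearly of polynomial size.

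Next, I would prove the correctness lemma: for every infinite $\ECTL$ model $T^{\infty}$ and every $s$-marked node $n$, $T^{\infty}, n \models \phi^{*}$ iff $T', n \models \phi$, where $T'$ is the maximal $s$-connected subtree of $T^{\infty}$ rooted at $n$ (with the natural extension of $\CTP$ semantics to possibly infinite trees). Combined with monotonicity of $\CTP$ under leaf-extensions and the fact that $\CTP$ satisfaction is witnessed by finite subtrees, this yields $\phi \subseteq \psi$ over finite trees iff $s \wedge \phi^{*} \subseteq \psi^{*}$ over infinite $\ECTL$ models. In the forward direction one extracts a finite $s$-witness $T$ of $\phi^{*}$, applies finite-tree containment to obtain $T \models \psi$, and lifts by monotonicity; in the backward direction one extends any finite $\phi$-model by dummy branches whose nodes carry no label from $\Sigma \cup \{s\}$ to obtain an infinite $\ECTL$ model, applies the $\ECTL$ containment, and restricts back to the $s$-part. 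The $\PSPACE$ upper bound then follows from~\cite{kupf:auto00}. For the single-labeled semantics one first applies Proposition~\ref{prop:standard} to reduce to the multi-labeled setting.

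The main obstacle is the correctness of the $s$-relativized translation in both directions, in particular the inductive clause for the until: one must verify that $EX\,E[(s \wedge \phi^{*})\,U\,(s \wedge \psi^{*})]$ picks out exactly those $s$-descendants of $n$ reached by a non-empty path along which $\phi$ holds continuously at $s$-nodes until a $\psi$-node is reached, matching the $\CTP$ semantics of $\until{\phi}{\psi}$ on the extracted (possibly infinite) $s$-subtree. Monotonicity of $\CTP$ is what then allows us to truncate any such witness to a finite $s$-subtree without breaking the reduction.
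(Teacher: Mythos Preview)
Your proposal is correct and follows essentially the same approach as the paper: reduce to $\ECTL$ containment via a polynomial relativization by a fresh variable $s$, handle the finite/infinite mismatch by extending finite countermodels with unlabeled infinite branches in one direction and by extracting the $s$-connected subtree (plus the finite-witness property of $\CTP$ simulations) in the other, and invoke Proposition~\ref{prop:standard} for the single-labeled case. The only cosmetic difference is where the $s$-guard is placed---the paper conjoins $s$ at the current node in every clause (so that $\theta^{s}$ always entails $s$), whereas you push $s$ to the target inside each modality and add it once at the top level; both variants yield the same reduction.
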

\begin{proof}
We prove the upper bound for the case of multi-labeled trees. The upper bound for single-labeled trees then follows by Proposition~\ref{prop:standard}. 
 
We recall from \cite{kupf:auto00} the definition of $\ECTL$. Let $\pvar$ be a set of propositional variables.
$\ECTL$ node formulas $\phi$ and path formulas $\alpha$ are
defined by:

\begin{align*}
\phi &::=   \top \mid \bot \mid p \mid \lnot p \mid \phi \land \phi \mid \phi
\lor \phi  \mid \exists \alpha
 \\
\alpha &::=   \phi \mid  \alpha \land \alpha \mid \alpha \lor \alpha \mid X
\alpha \mid  (\alpha U \alpha) \mid  (\alpha \hat{U} \alpha)
\end{align*}
where $p \in \pvar$, and asking that in node formulas, the
operators $X$(``next"), $U$(``until") and $\hat{U}$(``dual of until") are
always immediately preceded by $\exists$.

The semantics for $\ECTL$ is given by \emph{infinite finitely
branching trees}. $\ECTL$ path formulas are interpreted at a (possibly infinite) paths of the tree and  $\ECTL$ node formulas are interpreted at nodes of the tree. 
Our reduction will translate $\CTP^{\neg , \vee}$ formulas to $\ECTL$ node formulas, with occurrences of formulas of the form $\exists X \phi$ and $\exists \psi U \phi$. Semantics of such formulas is defined as follows, given an infinite tree $T=(N, E, r, \rho)$ and a node $n\in N$. 

\begin{align*}
T, n\models \exists X\phi\  & \text{iff} \  \text{there exists a node }m\in N \ \text{such that}\ nEm\ \text{and}\ T, m\models \phi, \\
T, n\models \exists \psi U \phi\ & \text{iff}\  \text{there exists a node}\ m\in N \ \text{such that}\ nE^*m\ \text{and}\ T, m\models \phi,\\
 & \text{and} \ \text{for all}\ n'\in N \ \text{with}\ nE^*n'E^+m \ \text{it holds}\ T, n'\models \psi. 
\end{align*}

\noindent
The $\CTP$ node formula $\until{\phi}{\psi}$ corresponds to the strict until operator, which is expressible in $\ECTL$ as $\exists X\exists \psi U \phi$.

We denote by
$\subseteq_\infty$ the containment  relation for $\ECTL$ node formulas over
finitely branching trees where each branch is infinite.

Let $\phi,\psi$ be in $\CXP^{\neg,\vee}$ and let
 $s$ be a new propositional variable not in $\phi,\psi$. The translation
 $(\cdot)^s$ from $\CXP^{\neg,\vee}$
to $\ECTL$ node formulas relativizes every subformula with $s$ and adjusts the
 syntax.

\begin{eqnarray*}
 (\top)^s &=& s\land \top \\
 (p)^s& = & s\land p \\
 (\neg p)^s &=&s\land \neg p \\
 (\phi_1\land \phi_2)^s & = & \phi^s_1 \land \phi^s_2 \\ 
 (\phi_1\lor \phi_2)^s &=& \phi^s_1\lor \phi^s_2 \\
  (\tup{\downarrow}\phi)^s &=& s \land \exists X \phi^s  \\
(\until{\psi}{\phi})^s &=&  s \land \exists X \exists \psi^sU\phi^s
\end{eqnarray*}
We claim that
 $\phi \subseteq \psi$  iff $\phi^s \subseteq_\infty \psi^s$.

The proof is by contraposition.
First assume that for some finite tree $T=(N, E, r, \rho)$ and node $n\in N$ it holds $T,n\models\phi$ and
$T,n\not\models\psi$. We then construct an infinite tree which is a counterexample. 
Let $s$ be a propositional variable not
occurring in $\phi$ or $\psi$. Let $T^s_\infty$ be the infinite tree
obtained from $T$  by adding to each leaf in $T$  
an infinite path. The labeling is changed as follows: all new nodes
have the empty label set and we add the label $s$ to the label set of
all old nodes.  Formally, $T^s_\infty=(N_\infty, E_\infty, r, \rho_\infty)$, where
$N_\infty=N\cup \{n^m_1,n^m_2, \ldots \mid m \text{ is a leaf in }T\}$ is the set of nodes, $E_\infty=E\cup \{(m, n^m_1)\mid n^m_1\in N_\infty\}\cup \{(n^m_i,n^m_{i+1})\mid n^m_i, n^m_{i+1}\in N_\infty, i\geq 1\}$ is the set of edges, $r$ is the root and $\rho_\infty$ is the labeling function defined as follows. 

\[ \rho_\infty(n) = \left\{ 
  \begin{array}{l l}
    \rho(n)\cup \{s\} & \quad \text{if\ }n\in N, \\
    \emptyset & \quad \text{if\ } n\in N_\infty\setminus N. 
  \end{array} \right.\]

We show by induction on the structure of the formula that for every node $n\in N$ and every $\CTP^{\neg,\vee}$ formula $\theta$, $T,n\models \theta$ iff 
$T^s_\infty, n\models \theta^s$.

To show this we will use the following two facts which are easy consequences of the definitions of $T^s_\infty$ and the translation $(\cdot)^s$:

$$T^s_\infty, n \models s \ \Leftrightarrow \ n\in N, \eqno{(*)}$$
and 
$$T^s_\infty, n\models \phi^s \ \Rightarrow \ T^s_\infty,n \models s. \eqno{(**)}$$

\begin{itemize}
\item $\theta=\top$. We have $T, n \models \top$ iff $n\in N$ iff $n\in N_\infty$ and $s\in \rho_\infty(n)$ iff $T^s_\infty, n \models s\land \top$ iff $T^s_\infty, n\models (\top)^s$.

\item $\theta=p, p\in \Sigma$. We have $T, n\models p$ iff $p\in \rho(n)$ iff (since $n\in N$) $\{p, s\}\subseteq \rho_\infty(n)$ iff $T^s_\infty, n\models s\land p$ iff $T^s_\infty, n\models (p)^s$.

\item $\theta=\neg p, p\in \Sigma$. We have $T, n\models \neg p$ iff $p\not \in \rho(n)$ iff (since $n\in N$ and, thus, $s\in \rho_\infty(n)$) $p\not \in \rho_\infty(n)$ and $s\in \rho_\infty(n)$ iff $T^s_\infty, n\models s\land \neg p$ iff $T^s_\infty, n\models (\neg p)^s$.

\item $\theta=\phi_1\land \phi_2$. We have $T, n\models \phi_1\land \phi_2$ iff $T,n\models \phi_1$ and $T, n\models \phi_2$ iff (by the induction hypothesis,  (*) and  (**)) $T^s_\infty, n\models \phi^s_1$ and $T^s_\infty, n\models \phi^s_2$ iff $T^s_\infty, n\models \phi^s_1\land \phi^s_2$ iff $T^s_\infty, n\models \phi^s$.

\item $\theta=\phi_1\lor \phi_2$. By the same argument as for $\phi_1\land \phi_2$.

\item $\theta=\tup{\downarrow}\phi$. We have $T, n\models \tup{\downarrow} \phi$ iff there exists $m\in N$ such that $nEm$ and $T, m\models \phi$ iff (by the induction hypothesis, (*), (**) and the fact that $n\in N$)  $T^s_\infty, n\models s$ and there exists $m\in N_\infty$ such that $nE_\infty m$ and $T^s_\infty, m\models \phi^s$ iff $T^s_\infty, n\models  s\land \exists X \phi^s$ iff $T^s_\infty \models (\tup{\downarrow}\phi)^s$. 

\item $\theta=\until{\psi}{\phi}$. We have $T, n \models \until{\psi}{\phi}$ iff there exists $m\in N$ such that $T, m\models \phi$ and for all $n'\in N$ with $nE^+ n' E^+ m$ it holds $T, n'\models \psi$. Then by the induction hypothesis, (*) and (**), the latter is equivalent to existence of $m\in N_\infty$ such that $T^s_\infty,m \models \phi^s$ and for all $n'\in N_\infty$ with $nE^+_\infty n' E^+_\infty m$ it holds $T^s_\infty \models \psi^s$. The latter implication holds because for every node $n'$ with $nE^+_\infty n' E^+_\infty m$ it holds $n'\in N$ and $nE^+n'E^+m$. Equivalently, there exists $n_1\in N_\infty$ and $m\in N_\infty$ such that $nE_\infty  n_1E^*_\infty m$, $T^s_\infty, m \models \phi^s$ and for all $n'$ with $n_1E^*_\infty n' E^+_\infty m$ it holds $T^s_\infty, n'\models \psi^s$.  This is equivalent to $T^s_\infty, n \models s\land \exists X\exists \psi^s U \phi^s$ as desired.

\end{itemize}
\noindent Thus, we obtain both $T^s_\infty, n \models \phi^s$ and $T^s_\infty,n \not\models \psi^s$.

For the other direction, let $T_\infty=(N_\infty, E_\infty, r, \rho_\infty)$ be some infinite tree over $\Sigma$ such that $T^\infty,r\models
\phi^s$ and $T^\infty,r\not\models \psi^s$. We then remove from the model all
nodes without $s$ and their descendants.
Formally, $T=(N, E, r, \rho )$, where $N=N_\infty \setminus \{n\mid \exists n' \in N_\infty. s\not\in \rho_\infty(n')\land n'E^*_\infty n\}$ is the set of nodes, $E=E_\infty|_{N\times N}$ is the set of edges, $r$ is the root, and $\rho(n)=\rho_\infty(n)$ is the labeling function.

By induction we can show that  for every $\CTP^{\neg, \vee}$ formula $\theta$ and node $n\in N$,  it holds $T_\infty, n\models \theta^s$ if and only if $T, n\models \theta^s$. 
The two non-trivial cases are the following. 

\begin{itemize}

\item $\theta = \tup{\downarrow}{\phi}$. We have $T_\infty ,n \models s\land \exists X \phi^s$ iff $T_\infty, n\models s$ and there exists $m\in N_\infty$ such that $nE_\infty m$ and $T_\infty, m\models \phi^s$ iff $T, n \models s$ and there exists $m\in N$ such that $nEm$ and $T, m\models \phi^s$. The direction from right to left is obvious, while the direction from left to right because of the following. Since $T_\infty, m\models \phi^s$, it follows that $T_\infty, m \models s$, i.e. $s\in \rho_\infty(m)$. Then since $n\in N$, we have that $m\in N$ too by the definition. The latter also implies that   $nE m$  holds  in $T$
and thus  that $T, n\models  s\land \exists X \phi^s$.

\item $\theta=\until{\psi}{\phi}$. We have $T_\infty , n \models s\land \exists X\exists \psi^s U\phi^s$ iff $T_\infty, n\models s$ and there exists $m\in N_\infty$ such that $T_\infty , m\models \phi^s$, $nE^+_\infty m$ and for all $n'\in N_\infty$ with $nE^+_\infty n'E^+_\infty m$ it holds $T_\infty, n' \models \psi^s$. This is equivalent to $T, n \models s$ and there exists $m\in N$ such that $T, m\models \phi^s$ and for all $n'$ with $nE^{+}n' E^{+} m$ it holds $T, m \models \psi^s$, which means $T, n\models  s\land \exists X\exists \psi^s U\phi^s$ as desired. The direction from left to write is trivial since $T$ is a substructure of $T_\infty$. The direction from right to left follows from the fact that $m\in N$ and all the nodes $n'$ such that $nE^+_\infty n' E^+_\infty m$ belong to $N$ as well and, moreover, it holds $nE^{+} n' E^{+} m$. 

\end{itemize}

\noindent Thus, we obtain that $T, r\models \phi^s$ and $T, r\not\models \psi^s$. Because all nodes of $T$ now make $s$
true, we can discard the relativization with $s$ in the formulas. Thus obtaining $T,r \models \phi$ and $T, r\not\models \psi$. 
The only problem is that $T$ is infinite.
 But using the simulations from Theorem~\ref{prop:simc} it is easy to see that  we can turn $T$ into a finite counterexample.

Thus, we have shown that $\phi\subseteq \psi$ if and only if $\phi^s\subseteq_\infty \psi^s$. The latter containment problem in $\ECTL$ is known to be decidable in $\PSPACE$. This concludes the proof.
\end{proof}

\begin{cor}
The containment problem for $\TP^{\neg}$ is  $\PSPACE$ complete. 
\end{cor}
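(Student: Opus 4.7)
The plan is to derive both bounds directly from results already established in the excerpt, with no new machinery required. For the lower bound, I would simply invoke item (ii) of the theorem immediately preceding Theorem~\ref{thm:ctp-up}, which reduces the corridor tiling problem to $\TP^{\neg}$ containment and thereby establishes $\PSPACE$-hardness.

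For the matching upper bound, the plan is to observe that $\TP^{\neg}$ is a syntactic fragment of $\CTP^{\neg,\vee}$: every $\TP^{\neg}$ node and path formula is already a well-formed $\CTP^{\neg,\vee}$ formula (since $\TP$ is a sublanguage of $\CTP$ and the treatment of negated labels is the same in both). Consequently, any containment instance $\phi \subseteq \psi$ with $\phi,\psi \in \TP^{\neg}$ is in particular an instance of $\CTP^{\neg,\vee}$ containment, and Theorem~\ref{thm:ctp-up} provides a $\PSPACE$ decision procedure. The same argument works uniformly for single-labeled and multi-labeled trees, since both theorems invoked cover both semantics.

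There is essentially no obstacle here: the corollary is a one-line combination of the preceding hardness reduction and the $\ECTL$-based upper bound. The only thing to double-check is that the hardness proof for $\TP^{\neg}$ does not rely on any feature (such as disjunction or the conditional axis) that would push the instance outside of $\TP^{\neg}$ proper; inspecting the reduction confirms this, since $\varphi$ is built with $?$, $\downarrow$, $\downarrow^+$ and negated labels only, and $\psi$ can then be handled via Proposition~\ref{prop:mik} which preserves membership in $\TP^{\neg}$.
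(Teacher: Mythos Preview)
Your proposal is correct and matches the paper's treatment: the corollary is stated without proof precisely because it follows immediately by combining the $\PSPACE$-hardness of $\TP^{\neg}$ containment (item (ii) of the hardness theorem in the lower-bounds subsection) with the $\PSPACE$ upper bound for $\CTP^{\neg,\vee}$ from Theorem~\ref{thm:ctp-up}, using the syntactic inclusion $\TP^{\neg}\subseteq\CTP^{\neg,\vee}$. Your additional remark that Proposition~\ref{prop:mik} keeps the hardness instance inside $\TP^{\neg}$ is a valid sanity check, though the paper already records this explicitly in the statement of that proposition.
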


However, restricting negation to a safe negation $p\land \neg q_1\land \ldots \land \neg q_n$  keeps the complexity of the containment problem for tree patterns in $\coNP$. Note that  $p\land \neg q_1\land \ldots \land \neg q_n$ only adds expressive power over multi-labeled models, because  on single labeled models $p\equiv p\land\neg q_1\land \ldots \land \neg q_n$. The proof of the following result can be found in \cite{marx:cont14}. Here $\TP^{\lor, \neg^s}$ is tree patterns expanded with disjunction and the safe negation construct. 

\begin{thm}
The containment problem for $\TP^{\lor, \neg^s}$ over multi-labeled trees is in $\coNP$. 
\end{thm}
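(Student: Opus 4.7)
The plan is to establish the $\coNP$ upper bound by a small-countermodel argument, adapting Miklau and Suciu's proof for plain $\TP$~\cite{mikl:cont04} to accommodate both disjunction and safe negation over multi-labeled trees. The $\coNP$ algorithm will nondeterministically guess a polynomial-size multi-labeled tree $T$ and verify in $\PTIME$ (using the polynomial-time model checking for Regular XPath) that $T\models\phi$ and $T\not\models\psi$; correctness reduces to proving the polynomial countermodel property.

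First I would reduce to the case where the antecedent contains no disjunction: put $\phi$ into disjunctive normal form $\phi\equiv\bigvee_i\phi_i$ with each $\phi_i\in\TP^{\neg^s}$, and observe that $\phi\not\mlcont\psi$ iff some $\phi_i\not\mlcont\psi$; the choice of $i$ is part of the $\coNP$ guess. For each $\phi_i$ one then has a canonical multi-labeled tree $T_{\phi_i}$, obtained from the tree representation of $\phi_i$ by realizing every $\downarrow^+$-edge by a single child step (which suffices by monotonicity) and labeling each pattern node with exactly the positive atoms it asserts, and no others. Because safe negation at a pattern node forbids only the explicitly listed atoms, the canonical labeling automatically satisfies every safe-negation constraint, so $T_{\phi_i}\models\phi_i$ by identity embedding/simulation. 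This canonical tree has $O(|\phi_i|)$ nodes carrying $O(|\phi_i|)$ labels in total.

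The crux is the small countermodel lemma: if $\phi_i\not\mlcont\psi$, some countermodel of size polynomial in $|\phi_i|+|\psi|$ exists. Starting from an arbitrary countermodel $T^\ast$, I would (i) restrict to the image of a fixed simulation of $\phi_i$ into $T^\ast$, giving a structural skeleton of size $O(|\phi_i|)$, and (ii) enrich this skeleton with the labels required for $\phi_i$-satisfaction together with polynomially many ``blocker'' witnesses: for each disjunct $\psi_j$ of $\psi$ and each candidate simulation root, a certificate preventing $\psi_j$ from being simulated there---either a missing required child/positive atom, or, to violate a safe-negation constraint $p\wedge\neg q_1\wedge\cdots\wedge\neg q_n$ of $\psi_j$, the explicit presence of some $q_\ell$. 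Each such certificate references only labels occurring in $\phi\cup\psi$ and contributes at most polynomially many extra nodes/labels, so the total countermodel stays within polynomial size.

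The main obstacle will be the blocker argument for safe negation in the consequent. Unlike plain $\TP$, where $\psi$ is monotone in labels so one may freely shrink or enlarge a countermodel, safe negation is anti-monotone in its negated atoms: \emph{removing} a label can make a safe-negation constraint \emph{true} and thereby destroy the countermodel property. The pruning therefore has to track carefully which labels play which of three distinct roles---satisfying the antecedent $\phi_i$, blocking some $\psi_j$ by the absence of a required positive atom, or blocking some $\psi_j$ by the presence of a forbidden atom---and retain exactly those. Since each safe-negation expression in $\psi$ mentions only polynomially many forbidden atoms and $\psi$ itself is of polynomial size, the number of label witnesses required at each of the $O(|\phi_i|)$ skeleton nodes remains polynomial, delivering the desired countermodel bound and hence $\coNP$ membership.
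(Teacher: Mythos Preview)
The paper itself does not prove this theorem; it states the result and defers to \cite{marx:cont14} for the argument, so there is no in-paper proof to compare your proposal against.

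Your plan nonetheless has a concrete gap at the core of the small-countermodel construction. In step~(i) you ``restrict to the image of a fixed simulation of $\phi_i$ into $T^\ast$, giving a structural skeleton of size $O(|\phi_i|)$.'' The image of an embedding does contain only $O(|\phi_i|)$ nodes, but those nodes are in general not adjacent in $T^\ast$: wherever $\phi_i$ has a $\downarrow^+$-edge, the witnessing path in $T^\ast$ can be arbitrarily long. Collapsing each such path to a single edge to obtain an $O(|\phi_i|)$-size tree can make $\psi$ true where it was false. The minimal instance is $\phi_i=\langle\downarrow^+\rangle p$ versus $\psi=\langle\downarrow\rangle p$: every countermodel places the $p$-node at depth at least $2$, yet the collapsed skeleton (root with a $p$-child) is a model of $\psi$. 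Your ``blocker'' certificates in step~(ii) are phrased as label-level data (a missing atom, or the presence of a forbidden atom) and do not address this; the failure here is not a missing child or atom but an \emph{unwanted} child edge created by the collapse. You also quantify over ``each disjunct $\psi_j$ of $\psi$'', but the DNF of $\psi$ may have exponentially many disjuncts, so that quantification is not polynomial either.

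The Miklau--Suciu device you are missing is not to collapse paths but to bound their length: one shows that each $\downarrow^+$-edge of $\phi_i$ need only be realised by a path of length at most roughly $|\psi|$, with the intermediate nodes carrying a single fresh label $z$ not occurring in $\phi$ or $\psi$. This gives a countermodel of size $O(|\phi_i|\cdot|\psi|)$ directly, with no enumeration of disjuncts of $\psi$. Safe negation then comes almost for free: the $z$-labelled intermediate nodes satisfy no positive atom of $\psi$ and hence no safe-negation conjunct $p\wedge\neg q_1\wedge\cdots\wedge\neg q_n$ either; at the image nodes one keeps exactly the labels of $T^\ast$ restricted to those occurring in $\phi$ or $\psi$, which preserves both the truth of $\phi_i$ and the falsity of $\psi$. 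Your final paragraph about tracking label roles is aimed in the right direction for this last step, but the structural stretch bound is the essential ingredient your plan omits.
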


\section{Conclusion} We have shown that adding conditions on the edges
of tree patterns gives a boost in expressive power which comes with
the price of a higher -- $\PSPACE$ -- complexity for the containment
problem than for tree patterns.
We located the source of the extra complexity in the fact that
unrestricted negations of labels can be coded  in Conditional Tree
Patterns. Adding negations of labels to tree patterns causes an increase in complexity of the containment problem from $\coNP$ to $\PSPACE$.

This paper is a first step in exploring ``Regular Tree Patterns'' and
fragments of it. We mention some directions for future work.

Miklau and Suciu in \cite{mikl:cont04} mention that existence of a
homomorphism between tree patterns is a necessary but not  sufficient condition for
containment in $\TP$. Can we extend the simulations between
conditional tree patterns and trees to simulations between queries,
partly capturing containment as for tree patterns?

What is the complexity of containment of regular tree patterns,
i.e. the positive fragment of Regular XPath without disjunction and union? 
As the satisfiability (and thus also the containment) problem for
Regular XPath is known to be  $\EXPTIME$-complete, it must lie in
between $\PSPACE$ and $\EXPTIME$. 

There are also interesting characterization questions: what
is the exact $\mathbf{FO}$ fragment corresponding to $\CXP$ or $\CXP$ with
disjunction and union?  Ten Cate~\cite{cate:expr06} showed that full Regular
XPath (with all four axis relations) expanded with path equalities is equally expressive  as
binary $\mathbf{FO}^*$ (First-order logic extended with a transitive closure
operator that can be applied to formulas with exactly two free
variables). Is every positive forward binary $\mathbf{FO}^*$ formula
expressible as a Regular Tree Pattern? Is
$\CXP$ with disjunction and union equally expressive as  $\mathbf{FO}$
intersected with positive $\mathbf{FO}^*$?  Are unions of $\CXP$ equivalent to
unions of the first order fragment of conjunctive regular path
queries~\cite{calv:cont00}?

\section{Acknowledgement}
We would like to thank the anonymous referees for their helpful comments.

\bibliographystyle{abbrv}
\bibliography{main-bib}
\appendix

\section{Translations between $\CTP$ and $ctp$}
\label{app:tra}

The following constructions are similar to the tree and graph representations for tree patterns and conjunctive queries over trees~\cite{mikl:cont04, bjor:conj11}

We define the translation function $\ct(\cdot)$ which assigns an equivalent  conditional tree pattern with one output node to  every $\CTP$ formula. The output node of $\ct(\phi)$, where $\phi$ is a node formula, equals the root.

Let $\alpha$ and $\phi$ be path and node $\CTP$ formulas. 
We define $\ct(\alpha) $ and $\ct(\phi)$ by mutual induction on the complexity of the path and node formulas. We take $\ct(\alpha)$ or $\ct(\phi)$ to be the tree $ (N, E , r, o, \rho_N, \rho_E)$, where the components are defined according to the cases. 

\begin{itemize}

\item $\alpha = \downarrow$. Then $N$ consists of two nodes $r$ and $n$. The edge relation $E$ is defined as $ \{\tup{r, n}\}$. Moreover, $o := n$, $\rho_N(v) = \emptyset$ for $v\in \{r, n\}$ and $\rho_E(\tup{r,n}) = \downarrow$,

\item $\alpha = ? \phi$. Then $\ct(\alpha) := \ct(\phi)$,

\item $\alpha = \alpha_1 ; \alpha_2$. Let $\ct(\alpha_1)$ and $\ct(\alpha_2)$ be the conditional tree patterns for $\alpha_1$ and $\alpha_2$. Then $c(\alpha)$ is the tree obtained as follows. We fuse the root of $\ct(\alpha_2)$ with the output node of $\ct(\alpha_1)$ and declare the output node of $\ct(\alpha_2)$ as the output node of $\ct(\alpha)$. The label of the fusion node is   the union of the labels of the output node of $\ct(\alpha_1)$ and the root of $\ct(\alpha_2)$. Labels of other nodes and the edges   remain the same as in $\ct(\alpha_1)$ and $\ct(\alpha_2)$. 

\item $\alpha = \untilpath$. Then $N$ consist of two nodes $r$ and $n$. The edge relation $E$ is defined as $ \{\tup{r, n}\}$. Moreover, $o := n$, $\rho_N(v) = \emptyset$ for $v\in \{r, n\}$ and $\rho_E(\tup{r,n}) = \ct(\phi)$.

\end{itemize}

\noindent For node formulas, the  output node of the translation result is   always defined as   the root $r$. 

\begin{itemize}

\item $\phi = p$. Then $N$ consists of a single node $r$, $E$ is empty,  the labeling $\rho_N(r)=\{p\}$.

\item $\phi = \top$. Similar to the previous case, with the exception that $\rho_N(r) = \emptyset$

\item $\phi = \phi_1 \wedge \phi_2$. Then $\ct(\phi) := \ct(\phi_1)\oplus \ct(\phi_2)$, i.e. the fusion of the conditional tree patterns $\ct(\phi_1)$ and $ \ct(\phi_2)$.

\item $\phi= \tup{\alpha} \phi_1$. Let $\ct(\alpha)$ and $\ct(\phi_1)$ be the corresponding conditional tree patterns. Then $\ct(\phi)$ is obtained by fusing the output node of $\ct(\alpha)$ with the root of $\ct(\phi)$. The labeling of the fusion node is defined as the union of the labels of the root  of  $\ct(\phi)$ and the output node of  $\ct(\alpha)$. Labels of the other nodes and edges   remain the same as in $\ct(\alpha)$ and $\ct(\phi)$.

\end{itemize}

\noindent Translation  $f(\cdot)$  works the other way around. Let $t= (N, E , r, o, \rho_N, \rho_E)$ be a conditional tree pattern with one output node. We define   $f(t)$ by induction on the nesting depth and the depth of the tree. We first define a mapping $\phi$ from nodes $v\in N$ to $\CTP$ node formulas. The mapping is defined inductively starting from the leaves as follows. Here, for a finite set $S= \{p_1, \ldots,  p_n\}$, by $\wedge S$ we denote the finite conjunction $p_1\wedge \ldots \wedge p_n$. We take $\wedge \emptyset$ to be $\top$.
For a conditional tree pattern $t$, by $r_t$ we denote the root of $t$.
For $v$ a node in pattern $t$:
\begin{align*}
\phi(v) &=\wedge \rho_N(v)\  \wedge \\ 
& \bigwedge_{\tup{v, v'}\in E ,\  \rho_E(v, v')=t'} \tup{? \phi(r_{t'})}\phi(v')\  \wedge \\
& \bigwedge_{\tup{v, v'} \in E , \ \rho_E(v, v') = \downarrow} \child{\phi(v')}
\end{align*}
Now let $r=v_1, \ldots, v_n=o$ be the path from the root $r$ to the output node $o$ in $t$. Let the expression $d_i, 1\leq i\leq n-1$ be defined by: $d_i = \downarrow$ if $\rho_E(v_i, v_{i+1}) =\downarrow$ and $d_i = \upath{\phi(r_{t'})}$ if $\rho_E(v_i, v_{i+1}) = t'$. Then the result of the translation $f(t)$ of the conditional tree pattern $t$ is the $\CTP$ path formula:

\[
?\phi(v_1); d_1 ; ?\phi(v_2); d_2; \ldots; ?\phi(v_{n-1}); d_{n-1} ;? \phi(v_n).
\]

For the next proposition we need the definition of  equivalence between conditional tree patterns. Let $t$ be a conditional tree pattern with output nodes $\bar{o}$,$ |\bar{o}|$=$k$, and $T$ a tree. Then the \emph{answer set} of $t$ over $T$ is the set $Out(t, T)=\{\tup{g(o_1), \ldots, g(o_k)}\mid g \text{ is a simulation of } t \text{ in } T\}$. We say that two conditional tree patterns $t_1$ and $t_2$ are \emph{equivalent}, denoted as $t_1\simeq t_2$, if $Out(t_1, T) = Out(t_2, T)$ for every tree $T$.

\begin{prop}
Let $\phi$ and $\alpha$ be $\CTP$ node and path formulas, $t$ a conditional tree pattern. Then it holds that 
\begin{enumerate}[label=(\roman*)]
\item $f(\ct(\phi)) \equiv \phi$ and $f(\ct(\alpha)) \equiv \alpha$,
\item $\ct(f(t)) \simeq t$.
\end{enumerate}
\end{prop}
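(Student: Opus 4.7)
The plan is to prove both items by structural induction, using Theorem~\ref{prop:simc} to transfer equivalences between formulas and simulation-based answer sets. For item (i), I would proceed by mutual induction on the structure of node and path $\CTP$ formulas. The base cases are direct: for $\alpha=\downarrow$, the pattern $\ct(\alpha)$ has two nodes with a $\downarrow$-edge, and $f$ returns $?\top;\downarrow;?\top$, semantically equal to $\downarrow$; for $\phi=p$ and $\phi=\top$, $\ct(\phi)$ is a single node and $f$ returns $?p$ (resp.\ $?\top$), which as a test is the formula itself. For the inductive step on path formulas: composition $\alpha_1;\alpha_2$ is realised in $\ct$ by fusing the output of $\ct(\alpha_1)$ with the root of $\ct(\alpha_2)$, and $f$ recovers this exactly by concatenating along the unique root-to-output spine; for $\alpha=\untilpath$, $\ct$ produces a pattern with one conditional descendant edge labelled by $\ct(\phi)$, and $f$ maps that edge back to $\upath{\phi(r_{\ct(\phi)})}$, with the inductive hypothesis on $\phi$ supplying equivalence of the label. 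For compound node formulas, conjunction corresponds to the fusion $\oplus$, and $\tup{\alpha}\phi_1$ to fusion of $\ct(\alpha)$ with $\ct(\phi_1)$ at the output/root; unfolding $f$ on the result reassembles the original formula up to the standard equivalences $\tup{\alpha;\beta}\phi\equiv\tup{\alpha}\tup{\beta}\phi$ and $\tup{?\phi}\psi\equiv\phi\wedge\psi$ noted after the $\CTP$ definition.

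For item (ii), I would induct on the lexicographic pair consisting of the nesting depth of $t$ and the size of $t$, so that sub-patterns labelling conditional descendant edges admit the inductive hypothesis. The argument has two layers. The outer layer treats the root-to-output path $r=v_1,\dots,v_n=o$: by construction, $f(t)$ assembles a sequence of the form $?\phi(v_1);d_1;\ldots;d_{n-1};?\phi(v_n)$, and applying $\ct$ to this sequence rebuilds a pattern whose spine exactly mirrors that of $t$, with branching at each $v_i$ given by the conjuncts of $\phi(v_i)$. The inner layer treats these branches: each conjunct of $\phi(v_i)$ of the form $\tup{?\phi(r_{t'})}\phi(v')$ or $\child{\phi(v')}$ corresponds to a conditional descendant or child edge of $t$ at $v_i$, and by the inductive hypothesis applied either to the subtree rooted at $v'$ (smaller size) or to $t'$ (strictly smaller nesting depth), the rebuilt subtree is $\simeq$-equivalent to the original. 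Combining the two layers yields $\ct(f(t))\simeq t$ via a bijection between simulations into an arbitrary target $T$.

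The main obstacle I anticipate is the bookkeeping around the fusion operation $\oplus$: it merges node labellings at the fusion point, and $f$ must recover these merged labels as conjunctive atoms in the node-formula map $\phi(v)$. One must verify that during the composition $\ct\circ f$, the possible introduction of duplicate conjuncts or multiply occurring identical child-subtrees does not affect answer sets — this is precisely where $\simeq$ (equivalence of answer sets, rather than strict isomorphism of patterns) becomes crucial. A secondary subtlety is that output nodes must be tracked consistently through every fusion step so that the reconstructed pattern has its output on the correct spine node; once this is verified, Theorem~\ref{prop:simc} promotes the pattern-level reconstruction to the semantic equivalences $\equiv$ and $\simeq$ asserted in the proposition.
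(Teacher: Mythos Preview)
Your proposal is correct and follows essentially the same approach as the paper: the paper's proof is just the one-line remark that (i) is by mutual induction on $\alpha$ and $\phi$, and (ii) is by induction on nesting depth and depth of $t$, which is exactly what you flesh out (your secondary measure is size rather than depth, but either works). Your attention to the bookkeeping around $\oplus$ and to why $\simeq$ rather than isomorphism is needed is appropriate and goes beyond what the paper spells out.
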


\noindent The proof of (i) is by mutual induction on $\alpha$ and $\phi$, and the proof of (ii) is by induction on nesting depth and depth of $t$.

\end{document}